\documentclass[11pt,reqno]{amsart}
\usepackage[utf8]{inputenc}
\usepackage{setspace}

\usepackage{amsmath,amssymb,amsthm,mathrsfs,graphicx,caption,booktabs,multirow,adjustbox,makecell,float}
\usepackage[margin=3.4cm]{geometry}
\usepackage{amsthm}
\usepackage{enumitem}
\newtheoremstyle{tight}
{0pt}   
{0pt}   
{}      
{}      
{\bfseries} 
{.}     
{0.5em} 
{}      

\theoremstyle{tight}

\newtheorem{theorem}{Theorem}[section]
\newtheorem{example}{Example}[section]
\newtheorem{definition}{Definition}[section]
\newtheorem{corollary}{Corollary}[section]
\newtheorem{property}{Property}[section]
\newtheorem{remark}{Remark}[section]

\title[Cumulative Residual Mathai--Haubold Entropy]{Weighted Cumulative Residual Mathai-Haubold Entropy}

\author[]{A\lowercase{nija} C.R.\lowercase{\textsuperscript{a}} , S\lowercase{mitha} S.\lowercase{\textsuperscript{a}} \lowercase{and} S\lowercase{udheesh} K. K\lowercase{attumannil.\textsuperscript{b}}
\\
\lowercase{\textsuperscript{a}}K\lowercase{uriakose} E\lowercase{lias} C\lowercase{ollege},
  M\lowercase{annanam},
  K\lowercase{erala},
  I\lowercase{ndia,}\\
\lowercase{\textsuperscript{b}}I\lowercase{ndian} S\lowercase{tatistical} I\lowercase{nstitute},
  C\lowercase{hennai}, I\lowercase{ndia.}}
  \thanks{Corresponding author email: skkattu@isichennai.res.in}
\begin{document}
\maketitle
\doublespacing
\vspace{-0.2in}
\begin{abstract}
    In this paper, we introduce the weighted cumulative residual Mathai--Haubold entropy and establish its fundamental properties. A dynamic version is developed, and its behavior under linear transformations is studied. Bounds and explicit expressions for some lifetime distributions are derived. Characterization results based on the associated measure are obtained and two new classes of life distributions are formulated. A goodness-of-fit test for the Rayleigh distribution is proposed and its performance is evaluated through a Monte Carlo simulation study. Applications to real data sets demonstrate the practical applicability of the proposed methodology.
\end{abstract}
Keywords: Weighted cumulative residual Mathai--Haubold entropy,  Characterization, Rayleigh distribution, Monte Carlo simulation.
\section{Introduction}
Information theory is an emerging field in applied mathematics and statistics which deals with the mathematical characterization of information and uncertainty in random systems. Measures such as entropy, divergence, and mutual information are widely used to quantify uncertainty, dependence, and variation in applications including statistical inference, signal processing, and reliability analysis. The cornerstone of information theory is the entropy measure introduced by Shannon (1948). For a continuous random variable $X$ with probability density function $f(x)$, the (differential) entropy is defined as
 \begin{equation}\label{Entropy}
H(X) = - \int f(x)\,\log f(x)\,dx,
\end{equation}
where $\log$ denotes the natural logarithm. This measure provides a fundamental quantification of uncertainty and randomness associated with the random variable.
\par The Shannon entropy has been generalized in various ways to characterize different aspects of uncertainty and information in complex systems. One such generalized measure is the Mathai--Haubold entropy introduced by Mathai and Haubold (2006), which extends the classical Shannon entropy by incorporating an additional parameter. For a continuous non negative random variable $X$ with probability density function $f(x)$, the Mathai--Haubold entropy of order $\alpha$ is defined as
\begin{equation}\label{M entropy}
M_\alpha(X) = \frac{1}{\alpha - 1}\left( \int_{0}^{\infty} (f(x))^{2-\alpha}\, dx - 1  \right),~ \alpha \neq 1,\,  0<\alpha<2,
\end{equation}
and it reduces to the Shannon entropy as $\alpha \to 1$. This measure provides a generalized framework for quantifying uncertainty, where the parameter $\alpha$ controls the sensitivity of the entropy to different regions of the probability distribution, allowing more sensitive differentiation between distributions with varying degrees of randomness.
\par Ebrahimi (1996) proposed a measure that quantifies the uncertainty associated with the remaining lifetime of a system, given that it is functioning at time $t$. This measure is known as the residual entropy function. Dar and Al-Zahrani (2013) introduced the Mathai--Haubold residual entropy by extending the Mathai--Haubold entropy to the residual lifetime framework, which is defined as
\begin{align}\label{MR entropy}
M_\alpha(X_t)
&= \frac{1}{\alpha-1}\left( \int_t^\infty \left(\frac{f(x)}{\bar{F}(t)}\right)^{2-\alpha} \, dx - 1 \right),
\qquad \alpha \neq 1,\; 0<\alpha<2,
\end{align}
where $f(x)$ denotes the probability density function of $X$ and $\bar{F}(t)=1-F(t)$ is the corresponding survival function. When $t=0$, (\ref{MR entropy}) reduces to (\ref{M entropy}).
\par Rao et al. (2004) introduced the cumulative residual entropy (CRE), an uncertainty measure based on the survival function of a random variable $X$. Unlike Shannon entropy, which depends on the existence of a density function, CRE is defined in terms of the cumulative distribution function and is therefore more stable due to its integral nature.
\par Recently, Anija et al. (2025) have proposed the cumulative residual Mathai-- Haubold entropy based on (\ref{M entropy}) as
\begin{align}\label{CRM}
CRM_{\alpha}(X)
&=
\frac{1}{\alpha-1}
\left(
\int_{0}^{\infty}
(\bar{F}(x))^{\,2-\alpha}\,dx
-1
\right),~\alpha \neq 1,\; 0<\alpha<2.
\end{align}
As $\alpha \to 1$, $CRM_\alpha(X)$ becomes the cumulative residual entropy (CRE) introduced by Rao et al. (2005). They have also extended the dynamic version of (\ref{CRM}) as
\begin{align}\label{CRMt}
CRM_{\alpha}(X_t)
&=
\frac{1}{\alpha-1}
\left(
\int_{t}^{\infty}
\left(
\frac{\bar{F}(x)}{\bar{F}(t)}
\right)^{\,2-\alpha}
\,dx
-1
\right),
\qquad
\alpha \neq 1,\; 0<\alpha<2,\; t>0.
\end{align}
\par In many real-life situations, standard probability distributions may not adequately fit the data. To overcome this, weighted distributions are often used. In recent years, this approach has found applications in various fields of statistics, including family size analysis, human heredity, global population studies, renewal theory, biomedical research, statistical ecology, and reliability modeling.

Shannon entropy assigns equal weight to all events. To incorporate the relative importance of different outcomes, Belis and Guiasu (1968) proposed the concept of weighted entropy, defined as
\begin{eqnarray}\label{wentropy}
H^w(X) &=& -\int_0^\infty x f(x) \log f(x) \, dx.
\end{eqnarray}

It is a length-biased and shift-dependent measure. Building on the concept of cumulative residual entropy (CRE), Mirali \textit{et al.} (2017) introduced a new information measure, known as the weighted cumulative residual entropy (WCRE), which generalizes classical entropy by incorporating both weighting and residual lifetime information. Its dynamic version is also defined, and both are given as

\begin{equation}\label{WCRE}
\epsilon^{w}(X)
= - \int_{0}^{\infty} x \, \bar{F}(x) \, \log \bar{F}(x) \, dx,
\end{equation}
where \(\bar{F}(x) = P(X > x)\) is the survival function of \(X\).

\begin{equation}\label{DWCRE}
\epsilon^{w}(X_t)
= - \int_{t}^{\infty} x \, \frac{\bar{F}(x)}{\bar{F}(t)} \,
\log \frac{\bar{F}(x)}{\bar{F}(t)} \, dx,
\end{equation}
where \(\epsilon^{w}(X_t)\) represents the WCRE of the residual lifetime \(X-t \mid X>t\).

Khammar and Jahanshahi (2018) proposed weighted cumulative residual Tsallis entropy and studied its properties. Balakrishnan et al. (2022) introduced weighted extropy, a shift-dependent measure that assigns greater emphasis to larger values of random variables. They further extended this concept to weighted residual and past extropies, and also explored its bivariate extensions. Chakraborty and Pradhan (2023) introduced weighted cumulative Tsallis residual and past entropies in dynamic forms, studied their distributional properties, and demonstrated their applications as risk measures.
They also included empirical results and a real-data example to illustrate the practical utility.



   Motivated by the increasing relevance of weighted cumulative information measures and the limited existing work on cumulative Mathai--Haubold entropy, this paper introduces and systematically studies the weighted cumulative residual Mathai--Haubold entropy. The paper is organized as follows. Section 2 presents the definition and fundamental properties of the proposed measure. In Section 3, we introduce the dynamic weighted cumulative residual Mathai--Haubold entropy and its important properties, including linear transformations, bounds, and its forms for several well-known lifetime distributions. Section 4 provides characterization results derived from the DWCRMHE, while Section 5 defines two new classes of life distributions based on this function. In Section 6, a new goodness-of-fit test for the Rayleigh distribution is developed, followed by Monte Carlo simulation studies in Section 7. Section 8 demonstrates the practical applicability of the proposed test through analyses of two real data sets, and Section 9 concludes the paper with final remarks.
\section{Weighted Cumulative Residual Mathai-Haubold Entropy (WCRMHE)}

In lifetime studies, the survival function plays a central role. This motivates the definition of the weighted cumulative residual Mathai–Haubold entropy (WCRMHE), which is defined as follows.
\begin{definition}

Let $X$ be a continuous non-negative random variable with survival function $\bar{F}(x)$. The weighted cumulative residual Mathai--Haubold entropy (WCRMHE), denoted by $CRM^{w}_{\alpha}(X)$, is defined as

\begin{equation}\label{WCRM}
    CRM^{w}_\alpha(X)= \frac{1}{\alpha-1}\left( {\int_0^\infty} x(\bar{F}(x))^{2-\alpha} \,dx -1 \right), \alpha \neq 1, 0<\alpha<2.
\end{equation}
\end{definition}
Introducing the factor $x$ on the right-hand side of the above integral yields an information measure that is length-biased and shift-dependent, assigning greater weight to larger values of the random variable $X$.
\par We can observe that, as $\alpha \to 1$, $CRM^{w}_\alpha(X)$ reduces to $\epsilon^{w}(X)$, which is the weighted cumulative residual entropy (WCRE) of $X$, given in (\ref{WCRE}).\\

The following example illustrate the usefulness of the proposed measure by showing that the equality of the cumulative residual Mathai--Haubold entropy does not necessarily imply equality of its weighted form.\\
\begin{example}\label{ex:WCRMHE_uniform}
Consider two random variables $X$ and $Y$ with probability density functions given, respectively, by
\begin{equation*}
f(x)=
\begin{cases}
\dfrac{1}{a}, & 0<x<a,\\
0, & \text{otherwise},
\end{cases}
\qquad \text{and} \qquad
g(y)=
\begin{cases}
\dfrac{1}{a}, & b<y<a+b,\\
0, & \text{otherwise}.
\end{cases}
\end{equation*}
From~(\ref{CRM}), we obtain
\begin{equation*}
CRM_{\alpha}(X)
=
CRM_{\alpha}(Y) =
\frac{1}{\alpha-1}
\left(
\frac{a}{3-\alpha}-1
\right).
\end{equation*}
Further from (\ref{WCRM}) we get,
\begin{equation*}
CRM^{w}_{\alpha}(X)
=
\frac{1}{\alpha-1}
\left(
\frac{a^{2}}{(3-\alpha)(4-\alpha)}-1
\right),
\end{equation*}
and
\begin{equation*}
CRM^{w}_{\alpha}(Y)
=
\frac{1}{\alpha-1}
\left(
\frac{a(a+b)(4-\alpha)-a^{2}(3-\alpha)}{(3-\alpha)(4-\alpha)}-1
\right).
\end{equation*}
Although
$CRM_{\alpha}(X)=CRM_{\alpha}(Y)$, but $CRM^{w}_{\alpha}(X)\neq CRM^{w}_{\alpha}(Y).$
\end{example}
Table 1 provides some well known families of distributions and their WCRMHE.
\begin{table}[htbp]
\centering
\caption{Expression of WCRMHE for some well--known distributions.}
\label{tab:CRMHE_distributions}
\renewcommand{\arraystretch}{1.3}
\begin{tabular}{|c|c|c|c|}
\hline
 & Distribution & $\bar{F}(x)$ & $CRM_{\alpha}^{w}(X)$ \\
\hline
(i)
& Uniform distribution
& $1-\dfrac{x}{a},\; 0<x<a,\; a>0$
& $\dfrac{1}{\alpha-1}\!\left(\dfrac{a^{2}}{(3-\alpha)(4-\alpha)}-1\right)$ \\
\hline
(ii)
& Exponential distribution
& $e^{-\lambda x},\; x\geq 0,\; \lambda>0$
& $\dfrac{1}{\alpha-1}\!\left(\dfrac{1}{\lambda^{2}(2-\alpha)^{2}}-1\right)$ \\
\hline
(iii)
& Pareto distribution
& $\left(\dfrac{k}{x}\right)^{a},\; x\geq k,\; k>0,\; a>0$
& $\dfrac{1}{\alpha-1}\!\left(\dfrac{k^{2}}{a(2-\alpha)-2}-1\right)$ \\
\hline
(iv)
& Power distribution
& $1-x^{c},\; 0<x<1$
& $\dfrac{1}{\alpha-1}\left(\dfrac{1}{c}B(\tfrac{2}{c},\,3-\alpha)-1\right)$ \\
\hline
(v)
(v)
& Rayleigh distribution
& $e^{-\tfrac{x^2}{2\sigma^2}}, x\geq 0$
&$\dfrac{1}{\alpha-1}\left(\dfrac{\sigma^2}{2-\alpha}-1\right)$\\
\hline

\end{tabular}
\end{table}

The following property describes the effect of the linear transformation on the weighted cumulative residual Mathai--Haubold entropy.
\begin{property}
    Consider the random variable $Y=aX+b$ with $a>0$ and $b \geq 0$, then
    \begin{equation}\label{prop1}
      CRM_{\alpha}^{w}(Y)=\frac{a^2+ab-1}{\alpha-1}+a^2~CRM_{\alpha}^{w}(X)+ab~CRM_{\alpha}(X).
  \end{equation}
\end{property}
\begin{proof}
The proof follows from the fact that ${\bar{F}_{aX+b}}(x)={\bar{F}_{X}}(\frac{x-b}{a})$ and using (\ref{WCRM}).
\end{proof}
\begin{corollary}
It is clear from the above property that
\begin{enumerate}
\item if $a=1$, then
\[
CRM_{\alpha}^{w}(Y)
= \frac{b}{\alpha-1} + CRM_{\alpha}^{w}(X)+ b ~ CRM_{\alpha}(X),
\]

\item if $b=0$, then
\[
CRM_{\alpha}^{w}(Y)
=
\frac{a^{2}-1}{\alpha-1}
+
a^{2}~CRM_{\alpha}^{w}(X).
\]
\end{enumerate}
\end{corollary}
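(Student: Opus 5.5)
The corollary is a direct specialization of the Property stated just above, so I would simply substitute the two parameter choices into (\ref{prop1}) and simplify. No new integral needs to be computed, since we may assume (\ref{prop1}) holds for all $a>0$ and $b\ge 0$.

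For part (1), I would set $a=1$ in (\ref{prop1}). The constant term becomes $\frac{1+b-1}{\alpha-1}=\frac{b}{\alpha-1}$. The coefficient $a^2$ of $CRM_\alpha^w(X)$ becomes $1$, and the coefficient $ab$ of $CRM_\alpha(X)$ becomes $b$. This gives
\[
CRM_{\alpha}^{w}(Y)=\frac{b}{\alpha-1}+CRM_{\alpha}^{w}(X)+b\,CRM_{\alpha}(X).
\]

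For part (2), I would set $b=0$. The term $ab\,CRM_\alpha(X)$ vanishes and the constant reduces to $\frac{a^2-1}{\alpha-1}$, giving
\[
CRM_{\alpha}^{w}(Y)=\frac{a^{2}-1}{\alpha-1}+a^{2}\,CRM_{\alpha}^{w}(X).
\]

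For completeness I would also indicate where (\ref{prop1}) comes from, since the corollary inherits its validity. Substitute $x=au+b$ in $\int x\,\bar F_X\!\left(\frac{x-b}{a}\right)^{2-\alpha}dx$. This gives $a^2\int_0^\infty u\,\bar F_X(u)^{2-\alpha}du+ab\int_0^\infty \bar F_X(u)^{2-\alpha}du$. Next, rewrite each integral through (\ref{WCRM}) and (\ref{CRM}) as $(\alpha-1)CRM^w_\alpha(X)+1$ and $(\alpha-1)CRM_\alpha(X)+1$ respectively. Subtracting $1$ and dividing by $\alpha-1$ then yields (\ref{prop1}).

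The only point needing care is that the substitution covers only the range $x\ge b$. On $[0,b)$ we have $\bar F_Y=1$, which contributes an extra $\frac{b^2}{2(\alpha-1)}$ when $b>0$; the Property as stated omits this term. This does not affect part (2), where $b=0$. For part (1), I would rely on the Property as stated in the paper, or note that the formula holds when the integration is taken over the support of $Y$. Beyond this, the argument is pure substitution and I expect no obstacle.
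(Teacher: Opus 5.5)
Your proposal is correct and follows the paper's route: the corollary comes from substituting $a=1$ and $b=0$ into (\ref{prop1}), with no further computation. Your caveat is also accurate. Under the literal definition (\ref{WCRM}) with integration from $0$, the interval $[0,b)$ contributes an extra $\frac{b^{2}}{2(\alpha-1)}$ that (\ref{prop1}) omits; the paper's uniform example and its Pareto table entry show it implicitly integrates over the support of the variable, under which part (1) holds as stated, while part (2) is unaffected either way.
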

Let $X$ be a continuous non-negative random variable with survival function $\bar{F}(x)$ and density function $f(x)$. The weighted mean residual lifetime (WMRL) of $X$ is defined as
\begin{equation}\label{WMRL}
m_F^{*}(t)=\frac{1}{\bar{F}(t)}\int_{t}^{\infty} x\,\bar{F}(x)\,dx,
\end{equation}
and
\[
m_F^{*}(0)=\int_{0}^{\infty} x\,\bar{F}(x)\,dx.
\]
The following property establishes bounds for the weighted cumulative residual Mathai--Haubold entropy using $m_F^{*}(0)$.

\begin{property}
Let $X$ be a non--negative continuous random variable with weighted mean residual lifetime (WMRL) function $m_{F}^{*}(t)$ and weighted cumulative residual Mathai--Haubold entropy $CRM^{w}_{\alpha}(X)$ such that $CRM^{w}_{\alpha}(X)<\infty$. Then,
\[
CRM^{w}_{\alpha}(X)
> (<)
\frac{1}{\alpha-1}
\bigl(m_{F}^{*}(0)-1\bigr),
\qquad 0<\alpha < 1 (1 < \alpha<2).
\]

\end{property}
\begin{proof}
Since
\[
x(\bar{F}(x))^{2-\alpha} < (>)\, x\bar{F}(x),
\qquad \forall\,~ 0<\alpha<1 \; (1<\alpha<2).
\]
Integrating both sides with respect to $x$ gives
\[
\int_{0}^{\infty} x(\bar{F}(x))^{2-\alpha}\,dx
<
(>)
\int_{0}^{\infty} x\bar{F}(x)\,dx .
\]
Hence
\[
CRM^{w}_{\alpha}(X)
>(<)
\frac{1}{\alpha-1}\bigl(m_F^{*}(0)-1\bigr).
\]
\end{proof}
  Let $X$ be a non-negative random variable with survival function $\bar{F}(x)$. A random variable $X_{\theta}^{*}$ is said to follow the proportional hazards (PH) model with parameter $\theta>0$ if its survival function satisfies
\[
\bar{F}_{\theta}^{*}(x) = (\bar{F}(x))^{\theta}, \qquad x \ge 0,~ \theta>0.
\]

The following property gives the transformation of the weighted cumulative residual Mathai--Haubold entropy measure under the proportional hazards (PH) model.
\begin{property}
$CRM_\alpha^{w}(X_\theta^*)=\frac{\beta-1}{\alpha-1}CRM_\beta^{w}(X)$,
where $\beta=2-\theta(2-\alpha)$.
\end{property}
\begin{proof}
   From the above relation, we have

  \[
  \qquad\qquad\quad
 CRM_\alpha^{w}(X_\theta^*) =\frac{1}{\alpha-1}\left( {\int_0^\infty} x(\bar{F}(x))^{\theta(2-\alpha)} \ dx -1\right).
\]
Let $2-\beta=\theta(2-\alpha)$, which implies $\beta = 2-\theta(2-\alpha).$\\
Then
\[
CRM_\alpha^{w}(X_\theta^*)
=
\frac{1}{\alpha-1}
\left(
\int_0^\infty x(\bar{F}(x))^{2-\beta} \, dx -1
\right).
\]
Using the definition of $CRM_\beta^{w}(X)$, we get
\[
CRM_\alpha^{w}(X_\theta^*)
= \frac{\beta-1}{\alpha-1}
\,CRM_\beta^{w}(X).
\]
\end{proof}
The following example illustrates the above property for the exponential
distribution.
\begin{example}
Let
$X$ has an exponential distribution with the parameter $\lambda$. Then
\[
CRM_\alpha^{w}(X_\theta^*)
=\frac{\beta-1}{\alpha-1}~CRM_\beta^{w}(X);\beta=2-\theta(2-\alpha).
\]
\end{example}
An application of Property~2.3 to order statistics is given in the following example.
\begin{example}
Let $X$ be a non-negative continuous random variable with distribution function
$F$, and let $X_{1:n}$ denote the first order statistic based on a random sample
$X_1,X_2,\ldots,X_n$ from $F$. Then the survival function of $X_{1:n}$ is given by
\[
\bar{F}_{X_{1:n}}(x) = \bigl(\bar{F}(x)\bigr)^n.
\]
Thus, $X_{1:n}$ follows the proportional hazards model with parameter $\theta=n$.
Consequently, using the definition of the weighted cumulative residual
Mathai--Haubold entropy, we obtain
\[
CRM_\alpha^{w}(X_{1:n})
=\frac{\beta-1}{\alpha-1}\,CRM_\beta^{w}(X),
\qquad \text{where } \beta=2-n(2-\alpha).
\]
\end{example}
The following theorem provides a sufficient condition for the finiteness of $CRM_{\alpha}^{w}(X)$.
\begin{theorem}
 If there exists $p>2$ such that $E(X^p)<\infty$, then $CRM_{\alpha}^{w}(X)<\infty$.
\end{theorem}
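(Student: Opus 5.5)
The plan is to write the integral in (\ref{WCRM}) as
\[
\int_0^\infty x(\bar F(x))^{2-\alpha}\,dx=\int_0^1 x(\bar F(x))^{2-\alpha}\,dx+\int_1^\infty x(\bar F(x))^{2-\alpha}\,dx
\]
and show that both pieces are finite. Finiteness of $CRM_{\alpha}^{w}(X)$ follows at once, since the prefactor $1/(\alpha-1)$ and the constant $-1$ are harmless. The first piece is trivial: $0\le\bar F\le 1$ gives $\int_0^1 x(\bar F(x))^{2-\alpha}dx\le 1/2$. Everything therefore rests on the tail integral over $(1,\infty)$.

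For the tail, the tool is Markov's inequality applied to $X^p$:
\[
\bar F(x)=P(X^p>x^p)\le \frac{E(X^p)}{x^p},\qquad x>0 .
\]
I split into the two ranges of $\alpha$ that appear throughout the paper, as in Property 2.2.

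If $0<\alpha\le 1$, then $2-\alpha\ge 1$, so $(\bar F(x))^{2-\alpha}\le \bar F(x)$. The tail integral is then bounded by
\[
\int_1^\infty x\bar F(x)\,dx\le E(X^p)\int_1^\infty x^{1-p}\,dx,
\]
which is finite exactly because $p>2$. This is the part of the argument where the hypothesis $p>2$ enters directly. It also shows that $m_F^*(0)<\infty$, consistent with Property 2.2.

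If $1<\alpha<2$, the exponent $2-\alpha$ lies in $(0,1)$, and we can no longer dominate $(\bar F)^{2-\alpha}$ by $\bar F$. Instead I raise the Markov bound to the power $2-\alpha$:
\[
x(\bar F(x))^{2-\alpha}\le (E(X^p))^{2-\alpha}\,x^{1-p(2-\alpha)},
\]
which is integrable on $(1,\infty)$ as soon as $p(2-\alpha)>2$. This is the step I expect to be the main obstacle. The hypothesis $p>2$ alone secures $p(2-\alpha)>2$ only when $\alpha$ is close enough to $1$, namely $\alpha<2-2/p$.

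My first attempt to close the gap would be to replace Markov's inequality with the sharper tail estimate $x^p\bar F(x)\to 0$, which holds when $E(X^p)<\infty$. This improves constants, but not the exponent. I would then check the argument against a Pareto-type tail $\bar F(x)\sim x^{-q}$ with $2<p<q$. Table 1(iii) indicates that the integral is finite precisely when $q(2-\alpha)>2$.

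So in writing up the proof I would present the $0<\alpha\le1$ case in full. For $1<\alpha<2$, I would state explicitly that the argument goes through under the condition $p(2-\alpha)>2$, and that this condition is automatically implied by $p>2$ only for $\alpha<2-2/p$. For $\alpha\ge 2-2/p$, the Pareto computation strongly suggests that $p>2$ alone does not guarantee finiteness, so I do not expect this case of the theorem to be provable as worded, and the write-up should say so rather than claim it.
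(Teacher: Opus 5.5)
Your diagnosis is correct, and the paper's own proof does not get past the obstacle you identify. The paper uses exactly your Markov bound $x(\bar F(x))^{2-\alpha}\le (E(X^p))^{2-\alpha}x^{1-p(2-\alpha)}$. It then asserts that the integral over $(0,\infty)$ is finite ``since $p>2$ and $0<\alpha<2$''. That step fails in two ways:
\begin{enumerate}
\item The function $x^{1-p(2-\alpha)}$ is never integrable on all of $(0,\infty)$: near $0$ it needs $p(2-\alpha)<2$, and near $\infty$ it needs $p(2-\alpha)>2$. So your split at $x=1$, with $\bar F\le 1$ on $[0,1]$, is needed even in the cases where the theorem holds.
\item For $\alpha>1$, the condition $p>2$ does not give $p(2-\alpha)>2$.
\end{enumerate}

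You can replace ``strongly suggests'' by an actual counterexample. In fact, because the hypothesis only asks for \emph{some} $p>2$, the statement fails for every $1<\alpha<2$. Take $\bar F(x)=1$ on $[0,1)$ and $\bar F(x)=x^{-a}$ for $x\ge 1$, with $2<a\le 2/(2-\alpha)$; this range is nonempty exactly when $\alpha>1$. Then $E(X^p)<\infty$ for every $p<a$, in particular for some $p>2$. On the other hand, $\int_1^\infty x^{1-a(2-\alpha)}\,dx=\infty$, and since $1/(\alpha-1)>0$ this gives $CRM_{\alpha}^{w}(X)=+\infty$. A concrete instance is $a=3$, $p=5/2$, $\alpha=3/2$.

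If you keep $p$ fixed, as in your write-up, the boundary case $\alpha=2-2/p$ needs a different example. A pure Pareto tail does not work there, because $a>p$ forces $a(2-\alpha)>2$. Instead take $\bar F(x)=e^{p}x^{-p}(\log x)^{-p/2}$ for $x\ge e$ and $\bar F=1$ on $[0,e)$. Then $E(X^p)=e^p+pe^p\int_1^\infty u^{-p/2}\,du<\infty$, whereas $\int_e^\infty x(\bar F(x))^{2/p}\,dx=e^2\int_e^\infty \frac{dx}{x\log x}=\infty$.

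The correct theorem is therefore: $CRM_{\alpha}^{w}(X)<\infty$ whenever $E(X^p)<\infty$ for some $p>2/(2-\alpha)$. Your second-case argument proves this for all $\alpha$: Markov's bound raised to the power $2-\alpha$ on $[1,\infty)$, and the trivial bound on $[0,1]$. For $0<\alpha<1$ this condition is implied by the paper's hypothesis. There even $E(X^2)<\infty$ suffices, since $\int_0^\infty x\bar F(x)\,dx=E(X^2)/2$.
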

\begin{proof}
Since $E(X^p)<\infty$ for some $p>2$, by Markov's inequality
\[
P(X>x)\le \frac{E(X^p)}{x^p}, \qquad x>0.
\]
Hence
\[
(\bar F(x))^{2-\alpha}\le \frac{(E(X^p))^{2-\alpha}}{x^{p(2-\alpha)}}.
\]
Therefore
\[
x(\bar F(x))^{2-\alpha}\le (E(X^p))^{2-\alpha}x^{1-p(2-\alpha)}.
\]
Since $p>2$ and $0<\alpha<2$
\[\int_0^\infty x(\bar F(x))^{2-\alpha}dx <\infty. \]
Thus
\[CRM_\alpha^w(X)<\infty.\]
\end{proof}


\section{Dynamic Weighted Cumulative Residual Mathai-Haubold Entropy (DWCRMHE)}
In this section, we introduce the dynamic version of weighted cumulative residual Mathai-Haubold entropy and is termed as dynamic weighted cumulative residual Mathai-Haubold entropy (DWCRMHE).
\begin{definition}
    Let $X$ be the lifetime of a component or system and assumes that it survive upto time $t$. In this situation, we consider the time--dependent or dynamic random variable $X_t=X-t|X>t$ with survival function
    \[
        \bar{F}_{t}(x) =
        \begin{cases}
          \frac{\bar{F}(x)}{\bar{F}(t)}, & \text{if $x>t$},\\
    1, & \text{$otherwise$}.
        \end{cases}
    \]
    \end{definition}
    \begin{definition}
        Let $X$ be a non--negative continuous random variable having survival function $\bar{F}(x)$. Then Dynamic weighted cumulative residual Mathai-Haubold entropy (DWCRMHE) denoted by $CRM^{w}_\alpha(F;t)$ or $CRM_{\alpha}^{w}(X_t)$, and is defined as
        \begin{align}\label{WCRMt}
     CRM^{w}_\alpha(F;t)=CRM^{w}_\alpha(X_t)= \frac{1}{\alpha-1}\left( {\int_t^\infty} x\left(\frac{\bar{F}(x)}{\bar{F}(t)}\right)^{2-\alpha} \,dx -1\right), \alpha \neq 1, 0<\alpha<2.
    \end{align}
\end{definition}
On differentiating (\ref{WCRMt}) with respect to $t$, we get
\begin{equation}\label{diff}
(\alpha-1)CRM_\alpha^{w}{'}(X_t)=(2-\alpha)h(t)\left((\alpha-1)CRM_\alpha^{w}(X_t)+1\right)-t.
\end{equation}
Next, we look into some properties of $CRM_{\alpha}^{w}(X_t)$. The following property shows that the dynamic measure reduces to the original measure at the initial time.
\begin{property}
    When $t=0$, $CRM^{w}_\alpha(X_t)=CRM^{w}_\alpha(X)$.
\end{property}
Next property gives the limiting form of the proposed measure as the parameter approaches unity.
\begin{property}
     $CRM^{w}_\alpha(X_t)$ reduces to dynamic weighted cumulative residual entropy $\epsilon^{w}(X_t)$, given in (\ref{DWCRE}) as $\alpha \to 1$.
\end{property}
The next property discusses the effect of linear transformations on $CRM^{w}_\alpha(X_t)$.
\begin{property}
    Let $Y=aX+b$ be the random variable with $a>0$ and $b \geq 0$. Then
    \begin{equation*}
        CRM^{w}_\alpha(Y;t)=\frac{a^2+ab-1}{\alpha-1}+a^2CRM_{\alpha}^{w}\left(X;\frac{t-b}{a}\right)+ab~CRM_{\alpha}\left(X;\frac{t-b}{a}\right).
    \end{equation*}
\end{property}
\begin{proof}
    We have
    \begin{equation}\label{DWCRM Y}
         CRM^{w}_\alpha(Y;t)= \frac{1}{\alpha-1}\left( {\int_t^\infty} y\left(\frac{\bar{F}_Y(y)}{\bar{F}(t)}\right)^{2-\alpha} \,dy -1\right).
    \end{equation}
    When $Y=aX+b$, (\ref{DWCRM Y}) turns into
    \begin{equation*}
        CRM^{w}_\alpha(Y;t)= \frac{1}{\alpha-1}\left(a^2\left((\alpha-1)CRM_{\alpha}^{w}\left(X;\frac{t-b}{a}\right)+1\right)+ab\left((\alpha-1)CRM_{\alpha}\left(X;\frac{t-b}{a}\right)+1\right)-1\right),
    \end{equation*}
    which implies
    \begin{equation*}
         CRM^{w}_\alpha(Y;t)=\frac{a^2+ab-1}{\alpha-1}+a^2CRM_{\alpha}^{w}\left(X;\frac{t-b}{a}\right)+abCRM_{\alpha}\left(X;\frac{t-b}{a}\right).
    \end{equation*}
   This completes the proof.
\end{proof}
\begin{corollary}
  It is clear from the above property that\\
    $(i)$ if $a=1$, then $CRM^{w}_\alpha(X+b;t)=\frac{b}{\alpha-1}+CRM_{\alpha}^{w}\left(X;{t-b}\right)+b~CRM_{\alpha}\left(X;{t-b}\right)$\\
    $(ii)$ if $b=0$, then $CRM_{\alpha}^{w}(aX;t)=\frac{a^2-1}{\alpha-1}+a^2~CRM_{\alpha}^{w}\left(X;\frac{t}{a}\right)$.
\end{corollary}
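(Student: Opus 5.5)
The corollary is obtained by specializing the linear-transformation property for $CRM^{w}_\alpha(Y;t)$ just established, with $Y=aX+b$, $a>0$, $b\ge 0$:
\[
CRM^{w}_\alpha(Y;t)=\frac{a^2+ab-1}{\alpha-1}+a^2CRM_{\alpha}^{w}\left(X;\tfrac{t-b}{a}\right)+ab\,CRM_{\alpha}\left(X;\tfrac{t-b}{a}\right).
\]
Both parts come from direct substitution into this identity. The only care needed is to check that each specialization satisfies the hypotheses of the property, namely $a>0$ and $b\ge 0$.

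For part $(i)$, I would set $a=1$. The constant term becomes $\frac{1+b-1}{\alpha-1}=\frac{b}{\alpha-1}$. The coefficient $a^2$ becomes $1$, and $ab$ becomes $b$. The time argument $\frac{t-b}{a}$ becomes $t-b$. This gives
\[
CRM^{w}_\alpha(X+b;t)=\frac{b}{\alpha-1}+CRM^{w}_\alpha(X;t-b)+b\,CRM_\alpha(X;t-b).
\]

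For part $(ii)$, I would set $b=0$, which is permitted since $b\ge 0$. The term $ab\,CRM_\alpha(\cdot)$ vanishes, the constant becomes $\frac{a^2-1}{\alpha-1}$, and the argument becomes $t/a$. This gives
\[
CRM^{w}_\alpha(aX;t)=\frac{a^2-1}{\alpha-1}+a^2CRM^{w}_\alpha\left(X;\tfrac{t}{a}\right).
\]

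There is no genuine obstacle. The one point to flag is that in part $(i)$ the argument $t-b$ may be negative when $t<b$. In that case one reads $CRM^{w}_\alpha(X;s)$ with $s<0$ through the convention $\bar F_t(x)=1$ for $x\le t$ from the definition of $X_t$. This interpretation is inherited from the parent property and needs no extra argument.
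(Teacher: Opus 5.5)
Your proposal is correct and matches the paper, which states this corollary as an immediate consequence of the linear-transformation property for $CRM^{w}_\alpha(Y;t)$; your substitutions $a=1$ and $b=0$ are exactly that derivation. Your remark on $t<b$ goes beyond the paper, which does not address it. The identity does hold there if you read $CRM^{w}_\alpha(X;t-b)$ and $CRM_\alpha(X;t-b)$ with $\bar F(x)=1$ for $x<0$: the extra terms $-\tfrac{(t-b)^2}{2}-b(t-b)$ equal $\int_t^b y\,dy$.
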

The following theorem establishes the uniqueness property of the proposed dynamic measure.
\begin{theorem}
 Let $X$ be a non-negative random variable with density function $f(x)$, survival function $\bar{F}(x)$ and hazard rate $h(x)$ and consider $CRM_{\alpha}^{w}(X_t)$ to increase in $t$. Then $CRM_{\alpha}^{w}(X_t)$ uniquely determines its distribution function.
\end{theorem}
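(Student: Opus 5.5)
The plan is to use the differential identity (\ref{diff}), which links $CRM_\alpha^{w}(X_t)$, its derivative, and the hazard rate $h(t)$. Solving (\ref{diff}) for $h(t)$ gives
\[
h(t)=\frac{(\alpha-1)\,CRM_\alpha^{w}{'}(X_t)+t}{(2-\alpha)\bigl((\alpha-1)CRM_\alpha^{w}(X_t)+1\bigr)}.
\]
Here the denominator is $(2-\alpha)\int_t^\infty x\bigl(\bar F(x)/\bar F(t)\bigr)^{2-\alpha}dx$, which is strictly positive for every $t$ in the support. So for each fixed $t$ the hazard rate is expressed entirely through the function $t\mapsto CRM_\alpha^{w}(X_t)$ and its derivative. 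Since $\bar F(t)=\exp\bigl(-\int_0^t h(u)\,du\bigr)$, knowing $h$ on $[0,\infty)$ determines $\bar F$, hence $F$.

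To turn this into a clean uniqueness argument, I will take two non-negative random variables $X$ and $Y$, with survival functions $\bar F,\bar G$ and hazard rates $h_F,h_G$. Assume $CRM_\alpha^{w}(F;t)=CRM_\alpha^{w}(G;t)=:\psi(t)$ for all $t\ge 0$. Differentiating this equality gives $\psi'(t)$ common to both. Applying (\ref{diff}) to each distribution and subtracting yields
\[
(2-\alpha)\bigl(h_F(t)-h_G(t)\bigr)\bigl((\alpha-1)\psi(t)+1\bigr)=0 .
\]
Because $(\alpha-1)\psi(t)+1>0$ and $2-\alpha>0$, this forces $h_F(t)=h_G(t)$ for all $t$. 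Integrating then gives $\bar F=\bar G$.

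The role of the hypothesis that $CRM_\alpha^{w}(X_t)$ increases in $t$ is to guarantee that $\psi$ is differentiable almost everywhere, with $\psi'\ge 0$. This makes the hazard expression above well defined and non-negative, and ensures it is a legitimate hazard rate. I will point this out explicitly, using monotonicity together with the differentiability of $\int_t^\infty x\bar F(x)^{2-\alpha}dx$ when $f$ exists.

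The main obstacle is justifying the differentiation step and the positivity of the factor $(\alpha-1)\psi(t)+1$. I will handle the latter directly from the integral representation. For the former, I will differentiate under the integral: the derivative of $\int_t^\infty x\bar F(x)^{2-\alpha}dx\,\bar F(t)^{\alpha-2}$ is $-t+(2-\alpha)h(t)\int_t^\infty x(\bar F(x)/\bar F(t))^{2-\alpha}dx$, which is exactly how (\ref{diff}) arises. This requires $\int_0^\infty x\bar F^{2-\alpha}<\infty$, which the earlier finiteness theorem secures under a moment condition. Everything else is the routine inversion $\bar F(t)=\exp(-\int_0^t h)$.
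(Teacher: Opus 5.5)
Your proposal is correct and follows essentially the same route as the paper: equate the two dynamic measures, differentiate via (\ref{diff}), cancel the common factor $(\alpha-1)\psi(t)+1=\int_t^\infty x\bigl(\bar F(x)/\bar F(t)\bigr)^{2-\alpha}dx>0$ to obtain $h_F=h_G$, and integrate the hazard (the paper leaves this positivity implicit, and you make it explicit). Your side remarks are not needed and are slightly off: the monotonicity hypothesis is never used (a.e.\ differentiability already follows from absolute continuity of $\bar F$, and when $0<\alpha<1$, $\psi'\ge 0$ does not make your hazard expression non-negative); also, finiteness of $\int_0^\infty x\bar F(x)^{2-\alpha}dx$ should be taken as implicit in the hypothesis rather than drawn from the moment-condition theorem, whose bound only works when $p(2-\alpha)>2$.
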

\begin{proof}
    Let us assume that $F(x)$ and $G(x)$ are the distribution functions such that
    \begin{equation}\label{char1}
       CRM_{\alpha}^{w}(F;t) = CRM_{\alpha}^{w}(G;t), \forall~ t\geq 0.
    \end{equation}
    This implies that
    \begin{equation}\label{char11}
       \int_{t}^{\infty}{x{\left(\frac{\bar{F}(x)}{\bar{F}(t)}\right)}^{2-\alpha}dx}=\int_{t}^{\infty}{x{\left(\frac{\bar{G}(x)}{\bar{G}(t)}\right)}^{2-\alpha}dx}.
    \end{equation}
    Differentiating (\ref{char1}) with respect to $t$, we get
    \begin{equation}\label{char12}
        h_1(t)((\alpha-1)CRM_{\alpha}^{w}(F;t)+1)=h_2(t)((\alpha-1)CRM_{\alpha}^{w}(G;t)+1),
    \end{equation}
    where $h_1(t)$ and $h_2(t)$ are the hazard rates related to $f(x)$ and $g(x)$ respectively.\\
    Using (\ref{char1}) in (\ref{char12}), we get
    \begin{equation*}
        h_1(t)=h_2(t)
    \end{equation*}
    which implies that
    \begin{equation*}
        \bar{F}(t)=\bar{G}(t).
    \end{equation*}
    Hence $CRM_{\alpha}^{w}(X_t)$ uniquely determines its distribution function.
\end{proof}
Table \ref{tab:WCRMHE_distributions} provides some well known families of distributions and their DWCRMHE.
\begin{table}[htbp]
\centering
\caption{Expression of DWCRMHE for some well--known distributions.}
\label{tab:WCRMHE_distributions}
\renewcommand{\arraystretch}{1.3}
\begin{tabular}{|c|c|c|c|}
\hline
 & Distribution & $\bar{F}(x)$ & $CRM_{\alpha}^{w}(X_t)$ \\
\hline
(i)
& Uniform distribution
& $1-\dfrac{x}{a},\; 0<x<a,\; a>0$
& $\dfrac{1}{\alpha-1}\!\left[\dfrac{a(a-t)}{(3-\alpha)}-\dfrac{(a-t)^2}{4-\alpha}-1\right]$ \\
\hline
(ii)
& Exponential distribution
& $e^{-\lambda x},\; x\geq 0,\; \lambda>0$
& $\dfrac{1}{\alpha-1}\!\left[\dfrac{t}{\lambda(2-\alpha)}+\dfrac{1}{\lambda^{2}(2-\alpha)^{2}}-1\right]$ \\
\hline
(iii)
& Pareto distribution
& $\left(\dfrac{k}{x}\right)^{a},\; x\geq k,\; k>0,\; a>0$
& $\dfrac{1}{\alpha-1}\!\left[\dfrac{\left(\frac{t}{max(t,k)}\right)^{a(2-\alpha)}(max(t,k))^{2}}{a(2-\alpha)-2}-1\right]$ \\
\hline
(iv)
& Power distribution
& $1-x^{c},\; 0<x<1$
& $\dfrac{1}{\alpha-1}\left[\dfrac{1}{c(1-t^c)^{2-\alpha}}B_{1-t^{c}}(3-\alpha, \tfrac{2}{c})-1\right]$ \\
\hline
(v)
& Rayleigh distribution
& $e^{-\tfrac{x^2}{2\sigma^2}}, x\geq 0$
&$\dfrac{1}{\alpha-1}\left[\dfrac{\sigma^2}{2-\alpha}-1\right]$\\
\hline

\end{tabular}
\end{table}
The following section presents the characterization results associated with $CRM^{w}_\alpha(X_t)$.
\section{Characterization Results}
Through studying the behaviour and properties of the dynamic weighted cumulative residual Mathai--Haubold entropy, we introduce the characterization results which uniquely determine the associated probability distribution.
The following theorem characterizes the Rayleigh distribution using $CRM_{\alpha}^{w}(X_t)$.
\begin{theorem}\label{cons}
    For a continuous non--negative random variable $X$ with distribution function $F(x)$, $CRM_{\alpha}^{w}(X_t)$ is independent of $t$ if and only if $X$ has the Rayleigh distribution.
\end{theorem}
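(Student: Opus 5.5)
The plan is to prove the two directions separately: the ``if'' part by direct computation, and the ``only if'' part through the differential identity (\ref{diff}) obtained earlier.

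For sufficiency, suppose $\bar F(x)=e^{-x^2/(2\sigma^2)}$. For $x>t$ we have $\bar F(x)/\bar F(t)=\exp\{-(x^2-t^2)/(2\sigma^2)\}$. Substituting $u=x^2-t^2$ gives
\[
\int_t^\infty x\left(\frac{\bar F(x)}{\bar F(t)}\right)^{2-\alpha}dx=\frac12\int_0^\infty e^{-(2-\alpha)u/(2\sigma^2)}\,du=\frac{\sigma^2}{2-\alpha}.
\]
Plugging this into (\ref{WCRMt}) gives $CRM^w_\alpha(X_t)=\frac{1}{\alpha-1}\bigl(\frac{\sigma^2}{2-\alpha}-1\bigr)$, which does not depend on $t$. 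This agrees with entry (v) of Table \ref{tab:WCRMHE_distributions}.

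For necessity, suppose $CRM^w_\alpha(X_t)=c$ for all $t\ge 0$. I will assume, as the paper does throughout this section, that $X$ has a density $f$, so that the hazard rate $h=f/\bar F$ exists. The left-hand side of (\ref{diff}) then vanishes, and (\ref{diff}) becomes
\[
(2-\alpha)\,h(t)\bigl((\alpha-1)c+1\bigr)=t,\qquad t\ge0.
\]
The key observation is that $(\alpha-1)c+1=\int_t^\infty x\bigl(\bar F(x)/\bar F(t)\bigr)^{2-\alpha}dx$, which is strictly positive; this is the same quantity for every $t$ by hypothesis. Define $k:=(2-\alpha)\bigl((\alpha-1)c+1\bigr)$. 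Since $0<\alpha<2$, we have $k>0$, and we may solve for the hazard rate: $h(t)=t/k$.

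Next I integrate this hazard with the initial condition $\bar F(0)=1$, using $\bar F(t)=\exp\{-\int_0^t h(u)\,du\}$. This gives $\bar F(t)=e^{-t^2/(2k)}$, which is the Rayleigh survival function with $\sigma^2=k$. As a consistency check, substituting back with the sufficiency computation gives $(\alpha-1)c+1=\sigma^2/(2-\alpha)$, matching $k=\sigma^2$.

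The main obstacle is rigor in the necessity direction rather than any computation. One must justify that $t\mapsto\int_t^\infty x\,\bar F(x)^{2-\alpha}dx$ is differentiable, so that (\ref{diff}) holds; this is where the existence of the density is used. One must also ensure that $(\alpha-1)c+1\neq0$, so that we can divide by it to obtain $h$, and positivity of the integral settles this. Finally, the case $t$ beyond the support must be ruled out. This is automatic, because $h(t)=t/k$ is finite for every $t$, so the support is all of $[0,\infty)$.
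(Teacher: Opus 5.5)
Your proof is correct and follows the paper's argument. Constancy kills the left side of (\ref{diff}), giving $h(t)=t/\sigma^2$ with $\sigma^2=(2-\alpha)\bigl((\alpha-1)c+1\bigr)$, and the converse is the direct computation $\int_t^\infty x\bigl(\bar F(x)/\bar F(t)\bigr)^{2-\alpha}dx=\sigma^2/(2-\alpha)$. Your extra points on positivity of $(\alpha-1)c+1$ and full support are refinements the paper leaves implicit; one small correction is that $\int_t^\infty x\bar F(x)^{2-\alpha}dx$ is differentiable from continuity of $\bar F$ alone, so the density is really needed only for the factor $\bar F(t)^{-(2-\alpha)}$, and the constancy identity itself forces that differentiability, so even that assumption could be dropped.
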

\begin{proof}
   Assume that
   \[CRM_{\alpha}^{w}(X_t)=k, \text {a constant}.\]
Therefore
\[ CRM_{\alpha}^{w}{'}(X_t)=0. \]
   Using (\ref{diff}), we get
   \begin{equation*}
       (2-\alpha) ~h(t)\left((\alpha-1)~CRM_\alpha^{w}(X_t)+1\right)-t=0.
   \end{equation*}
   This implies
   \begin{equation*}
       h(t)=\frac{t}{\sigma^2}; \sigma^2=(2-\alpha)((\alpha-1)k+1),
   \end{equation*}
which is the hazard rate of the Rayleigh distribution.\\ Conversely, let $X$ follows the Rayleigh distribution with the survival function
\begin{equation*}
\bar{F}(x)=exp\left({-\frac{x^2}{2\sigma^2}}\right), x\geq 0, \sigma>0.\\
\end{equation*}
 Then
   \begin{equation*}
       CRM_{\alpha}^{w}(X_t)=\frac{1}{\alpha-1}\left(\frac{\sigma^2}{2-\alpha}-1\right),\text{a constant.}
   \end{equation*}
   Hence, the theorem.
\end{proof}
Next theorem characterizes the Rayleigh distribution in terms of $CRM_{\alpha}^{w}(X_t)$ and the weighted mean residual life function $m^{*}_{F}(t)$.
\begin{theorem}
   Consider a continuous non--negative random variable $X$. Then
\[
CRM_{\alpha}^{w}(X_t)=\frac{1}{\alpha-1}\left(\frac{m^{*}_{F}(t)}{2-\alpha}-1\right),
\]
where, $m^{*}_{F}(t)$ is the WMRL given in (\ref{WMRL}),
if and only if $X$ follows the Rayleigh distribution with survival function
\[
\bar{F}(t)=\exp\left(-\frac{t^2}{2\sigma^2}\right), \quad \sigma>0.
\]
\end{theorem}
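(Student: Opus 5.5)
The plan is to rewrite the hypothesis as an integral identity in $\bar F$ alone, differentiate it once to get a pointwise relation between the hazard rate $h(t)$ and $m_F^{*}(t)$, and then show this relation forces $m_F^{*}$ to be constant. A constant $m_F^{*}$ gives the linear hazard of the Rayleigh law, just as in the proof of Theorem~\ref{cons}.

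\emph{Sufficiency.} I would handle this direction first, since it is a direct computation. For $\bar F(x)=\exp(-x^2/2\sigma^2)$ we have
\[
\int_t^\infty x\,e^{-(2-\alpha)x^2/2\sigma^2}\,dx=\frac{\sigma^2}{2-\alpha}\,e^{-(2-\alpha)t^2/2\sigma^2},
\]
so dividing by $\bar F(t)^{2-\alpha}$ recovers the constant value $CRM_\alpha^{w}(X_t)=\frac{1}{\alpha-1}\bigl(\frac{\sigma^2}{2-\alpha}-1\bigr)$ recorded in Table~\ref{tab:WCRMHE_distributions}. The same substitution with exponent $1$ gives $\int_t^\infty x\bar F(x)\,dx=\sigma^2\bar F(t)$, hence $m_F^{*}(t)=\sigma^2$ for all $t$. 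Substituting this shows the stated identity holds.

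\emph{Necessity.} Put $B(t)=\int_t^\infty x\bar F(x)\,dx=\bar F(t)\,m_F^{*}(t)$. Multiplying the hypothesis by $(\alpha-1)$, adding $1$, and clearing $\bar F(t)^{2-\alpha}$ turns it into
\[
(2-\alpha)\int_t^\infty x\,(\bar F(x))^{2-\alpha}\,dx=(\bar F(t))^{1-\alpha}\,B(t),\qquad t\ge 0 .
\]
I would differentiate both sides in $t$, using $\bar F'=-f$ and $B'(t)=-t\bar F(t)$:
\begin{itemize}
\item the left side gives $-(2-\alpha)t\bar F(t)^{2-\alpha}$;
\item the right side gives $-(1-\alpha)f(t)\bar F(t)^{-\alpha}B(t)-t\bar F(t)^{2-\alpha}$.
\end{itemize}
After cancellation, the common factor $(1-\alpha)$ is nonzero because $\alpha\neq1$, and dividing by it and by $\bar F(t)^{-\alpha}$ leaves $t\,\bar F(t)^2=f(t)B(t)$. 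Equivalently,
\[
h(t)\,m_F^{*}(t)=t .
\]

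Next I differentiate $B(t)=\bar F(t)m_F^{*}(t)$, which yields $m_F^{*\prime}(t)=h(t)m_F^{*}(t)-t$. Combined with the previous display this gives $m_F^{*\prime}\equiv0$, so $m_F^{*}(t)=m_F^{*}(0)=:\sigma^2>0$. Then $h(t)=t/\sigma^2$, and integrating the hazard gives $\bar F(t)=\exp(-t^2/2\sigma^2)$.

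An alternative route to the same conclusion is to insert $h\,m_F^{*}=t$ into (\ref{diff}). This gives $(\alpha-1)CRM_\alpha^{w\prime}(X_t)=0$, so $CRM_\alpha^{w}(X_t)$ is constant, and Theorem~\ref{cons} then applies directly.

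The only delicate point is justifying the differentiation: we need $f$ continuous, $\bar F(t)>0$ on the support, and finiteness of $m_F^{*}$ and $CRM_\alpha^{w}(X_t)$. I will state these as standing regularity assumptions, exactly as is implicitly done in deriving (\ref{diff}). Once they are granted, the algebra above is mechanical.
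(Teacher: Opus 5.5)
Your proof is correct and follows essentially the same route as the paper. Both directions match: necessity combines the differentiated hypothesis with $m_F^{*\prime}(t)=h(t)m_F^{*}(t)-t$ to get $m_F^{*\prime}\equiv 0$ and $h(t)=t/\sigma^2$, and sufficiency is a direct computation for the Rayleigh law. The only differences are presentational: you differentiate the cleared identity directly where the paper substitutes into (\ref{diff}), and your explicit cancellation of the factor $(1-\alpha)$ spells out the use of $\alpha\neq 1$ that the paper leaves implicit.
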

\begin{proof}
 Assume \begin{equation*}
     CRM_{\alpha}^{w}(X_t)=\frac{1}{\alpha-1}\left(\frac{m^{*}_{F}(t)}{2-\alpha}-1\right).
  \end{equation*}
  Using (\ref{diff}), we get
  \begin{equation}\label{thm1}
    h(t)~m^{*}_{F}(t)-t=\frac{1}{2-\alpha}m_{F}^{*}{'}(t).
  \end{equation}
  Differentiating $m^{*}_{F}(t)$ with respect to $t$, we get
  \begin{equation}\label{thm2}
      m_{F}^{*}{'}(t)=m^{*}_{F}(t)~h(t)-t.
  \end{equation}
  From (\ref{thm1}) and (\ref{thm2}), we get
  \begin{equation*}
       m^{*}{'}_{F}(t)=0,
  \end{equation*}
  which implies
  \begin{equation*}
      m^{*}_{F}(t)=\sigma^2, \text{a constant}.
  \end{equation*}
  Also, from (\ref{thm1}), we get
  \begin{equation*}
      h(t)~m^{*}_{F}(t)=t.
  \end{equation*}
  So
  \begin{equation*}
      h(t)=\frac{t}{\sigma^2}.
  \end{equation*}
  Therefore
  \begin{equation*}
      \bar{F}(t)=exp\left(-\frac{t^2}{2\sigma^2}\right),
  \end{equation*}
  which is the survival function of the Rayleigh distribution.
  Conversely, assume that $X$ has the Rayleigh distribution. Then
  \begin{equation*}
      CRM_{\alpha}^{w}(X_t)=\frac{1}{\alpha-1}\left(\frac{\sigma^2}{2-\alpha}-1\right)=\frac{1}{\alpha-1}\left(\frac{m^{*}_{F}(t)}{2-\alpha}-1\right).
  \end{equation*}
  Hence, the proof.
\end{proof}
\section{New classes of lifetime distributions}
In this section, we introduce two new classes of lifetime distributions based on the dynamic weighted cumulative residual Mathai-- Haubold entropy.
\begin{definition}\label{def}
    The random variable $X$ is said to have an increasing (decreasing) WDCRMHE, denoted by IWDCRMHE (DWDCRMHE), if $CRM_{\alpha}^{w}(X_t)$ is increasing (decreasing) as time $t$ passes for $t\geq 0$.

\end{definition}
\begin{definition}
   The random variable $X$ is said to have an increasing (decreasing) failure rate IFR (DFR), if $h(t)$ is increasing (decreasing) as time $t$ passes for $t\geq 0$.
\end{definition}
\begin{theorem}
    For a random variable $X$ with distribution function $F(x)$ is said to have an increasing (decreasing) WDCRMHE if and only if $\forall~ t>0$.
    \begin{equation}\label{new1}
        CRM_\alpha^{w}(X_t)\geq(\leq)\frac{1}{\alpha-1}\left(\frac{t}{(2-\alpha)h(t)}-1\right), for~ 1<\alpha<2
    \end{equation}
  and
  \begin{equation}\label{new2}
        CRM_\alpha^{w}(X_t)\leq(\geq)\frac{1}{\alpha-1}\left(\frac{t}{(2-\alpha)h(t)}-1\right), for ~ 0<\alpha<1.
    \end{equation}
\end{theorem}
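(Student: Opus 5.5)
The plan is to read everything off the differential identity (\ref{diff}),
\[
(\alpha-1)\,CRM_\alpha^{w}{'}(X_t)=(2-\alpha)h(t)\bigl((\alpha-1)CRM_\alpha^{w}(X_t)+1\bigr)-t ,
\]
which holds for every $t>0$ with $h(t)>0$. By Definition~\ref{def}, $X$ is IWDCRMHE (DWDCRMHE) exactly when $CRM_\alpha^{w}{'}(X_t)\ge 0\ (\le 0)$ for all $t>0$. So each claimed equivalence should reduce to a pointwise sign manipulation of (\ref{diff}), applied in both directions.

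For $1<\alpha<2$ both $\alpha-1>0$ and $2-\alpha>0$. Hence $CRM_\alpha^{w}{'}(X_t)\ge0$ is equivalent to
\[
(2-\alpha)h(t)\bigl((\alpha-1)CRM_\alpha^{w}(X_t)+1\bigr)\ge t .
\]
Dividing by $(2-\alpha)h(t)>0$, subtracting $1$, and dividing by $\alpha-1>0$ gives
\[
CRM_\alpha^{w}(X_t)\ge\frac{1}{\alpha-1}\Bigl(\frac{t}{(2-\alpha)h(t)}-1\Bigr),
\]
which is (\ref{new1}). Every step is reversible, so both implications follow, and the decreasing case is obtained by reversing all inequalities.

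For $0<\alpha<1$ I would run the same chain, but now I must track where $\alpha-1<0$ enters. It enters twice: once on the left-hand side of (\ref{diff}), and once in the final division by $\alpha-1$. I expect this sign bookkeeping to be the main obstacle. Carrying it out literally, each of the two flips reverses the inequality, so the net orientation is the same as in (\ref{new1}) rather than the reversed one displayed in (\ref{new2}).

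To reach (\ref{new2}) exactly as worded, I would therefore need to make precise the convention under which the paper reads ``increasing'' when $0<\alpha<1$. The natural candidate is to phrase monotonicity through the positive quantity
\[
(\alpha-1)CRM_\alpha^{w}(X_t)+1=\int_t^\infty x\Bigl(\tfrac{\bar F(x)}{\bar F(t)}\Bigr)^{2-\alpha}dx .
\]
Its derivative is the right-hand side of (\ref{diff}), and it moves opposite to $CRM_\alpha^{w}(X_t)$ when $\alpha<1$. Under that reading, increasing gives
\[
(\alpha-1)CRM_\alpha^{w}(X_t)+1\ge \frac{t}{(2-\alpha)h(t)} ,
\]
and dividing by $\alpha-1<0$ produces the ``$\le$'' in (\ref{new2}). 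The decreasing case again follows symmetrically.

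I would state this convention explicitly. Without it, the literal monotonicity of $CRM_\alpha^{w}(X_t)$ yields the same inequality direction as in (\ref{new1}), and the displayed (\ref{new2}) would not follow.
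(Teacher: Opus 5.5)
Your derivation of (\ref{new1}) is exactly the paper's argument. The paper's entire proof is the remark that the theorem follows from (\ref{diff}) and Definition~\ref{def}. For $0<\alpha<1$ your sign bookkeeping is correct and the printed (\ref{new2}) is not: it is what one obtains by performing only one of the two reversals you identify. The cleanest way to see this is to divide (\ref{diff}) by $\alpha-1$ and rearrange. This gives, for every $\alpha\in(0,1)\cup(1,2)$,
\[
\frac{d}{dt}\,CRM_\alpha^{w}(X_t)=(2-\alpha)h(t)\left[CRM_\alpha^{w}(X_t)-\frac{1}{\alpha-1}\left(\frac{t}{(2-\alpha)h(t)}-1\right)\right].
\]
Since $(2-\alpha)h(t)>0$, $X$ is IWDCRMHE (DWDCRMHE) if and only if $CRM_\alpha^{w}(X_t)\geq(\leq)\frac{1}{\alpha-1}\bigl(\frac{t}{(2-\alpha)h(t)}-1\bigr)$ for all $t>0$. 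This is the orientation of (\ref{new1}), and it holds in both ranges of $\alpha$.

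The exponential law is an explicit counterexample to (\ref{new2}). By Table~\ref{tab:WCRMHE_distributions}, for $0<\alpha<1$ the derivative of $CRM_\alpha^{w}(X_t)$ is $1/\bigl((\alpha-1)(2-\alpha)\lambda\bigr)<0$, so the measure is strictly decreasing. Yet $CRM_\alpha^{w}(X_t)$ minus the right-hand side of (\ref{new2}) equals $1/\bigl((\alpha-1)(2-\alpha)^2\lambda^2\bigr)<0$ for every $t$. So (\ref{new2}) would declare $X$ to be IWDCRMHE and not DWDCRMHE.

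The one thing I would drop is your proposed rescue. Reading ``increasing'' as monotonicity of $(\alpha-1)CRM_\alpha^{w}(X_t)+1$ is not an available convention. Definition~\ref{def} defines the classes through the monotonicity of $CRM_\alpha^{w}(X_t)$ itself, and for $0<\alpha<1$ the two quantities move in opposite directions. You would therefore be proving a different theorem about swapped classes.

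Your reading would also break the Remark stated right after the theorem. Under your reading, ``increasing'' for $0<\alpha<1$ means $h(t)\geq t/\bigl((2-\alpha)((\alpha-1)CRM_\alpha^{w}(X_t)+1)\bigr)$. The Remark asserts $\leq$, which is exactly what the literal definition gives. So (\ref{new2}) and the Remark are inconsistent with each other as printed, and the error lies in (\ref{new2}).

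The correct conclusion of your analysis is that the theorem holds with the orientation of (\ref{new1}) for all $\alpha\neq 1$, equivalently in the hazard-rate form of the Remark. The statement as printed cannot be proved for $0<\alpha<1$.
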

\begin{proof}
    The proof directly follows from (\ref{diff}) and Definition \ref{def}.
\end{proof}
\begin{remark}
  Equations (\ref{new1}) and (\ref{new2}) can also be expressed as
\end{remark}
\begin{equation*}
    h(t)\geq(\leq)\frac{t}{(2-\alpha)((\alpha-1)CRM_\alpha^{w}(X_t)+1)}, for~ 1<\alpha<2
  \end{equation*}
  and
  \begin{equation*}
      h(t)\leq(\geq)\frac{t}{(2-\alpha)((\alpha-1)CRM_\alpha^{w}(X_t)+1)}, for~ 0<\alpha<1.
  \end{equation*}
  \begin{theorem}
      Let $X$ and $Y$ be two non--negative continuous random variables with survival functions $\bar{F}(t)$ and $\bar{G}(t)$ and hazard rate functions $h_{1}(t)$ and $h_{2}(t)$, respectively. Suppose $X\geq^{hr} Y$, that is, $h_{1}(t) \leq h_{2}(t)$ for all $t\geq 0$.
      Then $CRM_\alpha^{w}(F;t) \geq(\leq)~ CRM_\alpha^{w}(G;t)$ $\forall$ $1<\alpha<2$~($0<\alpha<1$).
  \end{theorem}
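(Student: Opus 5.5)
The approach is to compare the integrands of the two dynamic measures pointwise and then integrate. Using the hazard-rate representation of the survival function, for $x>t$ we have
\[
\frac{\bar{F}(x)}{\bar{F}(t)}=\exp\left(-\int_t^x h_1(u)\,du\right), \qquad \frac{\bar{G}(x)}{\bar{G}(t)}=\exp\left(-\int_t^x h_2(u)\,du\right).
\]
The assumption $X\geq^{hr}Y$, that is $h_1(u)\le h_2(u)$ for all $u\ge 0$, then gives
\[
\frac{\bar{F}(x)}{\bar{F}(t)}\;\geq\;\frac{\bar{G}(x)}{\bar{G}(t)}, \qquad x\ge t.
\]
This is the usual equivalent formulation of the hazard rate order: $\bar{F}(x)/\bar{G}(x)$ is nondecreasing in $x$.

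Since $0<\alpha<2$, the exponent $2-\alpha$ is positive, so $y\mapsto y^{2-\alpha}$ is increasing on $[0,1]$. Raising both sides of the inequality to the power $2-\alpha$ preserves it. Multiplying by the nonnegative weight $x$ and integrating over $(t,\infty)$ then yields
\[
\int_t^\infty x\left(\frac{\bar{F}(x)}{\bar{F}(t)}\right)^{2-\alpha}dx \;\geq\; \int_t^\infty x\left(\frac{\bar{G}(x)}{\bar{G}(t)}\right)^{2-\alpha}dx .
\]

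Next, subtract $1$ from both sides and multiply by $\frac{1}{\alpha-1}$, using definition (\ref{WCRMt}). The direction of the inequality depends on the sign of $\alpha-1$:
\begin{itemize}
\item for $1<\alpha<2$ the factor is positive, so $CRM_\alpha^{w}(F;t)\geq CRM_\alpha^{w}(G;t)$;
\item for $0<\alpha<1$ the factor is negative, so the inequality reverses and $CRM_\alpha^{w}(F;t)\leq CRM_\alpha^{w}(G;t)$.
\end{itemize}
This is exactly the claimed statement.

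The only delicate point is integrability. The inequality for the integrals is meaningful only when both sides are finite, so I would assume, as is implicit in the statement, that $CRM_\alpha^{w}(F;t)$ and $CRM_\alpha^{w}(G;t)$ are finite. One could invoke the moment condition of the earlier finiteness theorem, applied to $X$, to guarantee this. If the $F$-integral is finite, the pointwise domination gives finiteness of the $G$-integral as well.

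Everything else is a direct monotonicity argument. No differentiation or use of (\ref{diff}) is needed, and that identity would only complicate the comparison.
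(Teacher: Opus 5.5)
Your proof is correct and is essentially the paper's own argument: the pointwise comparison $\bar{F}(x)/\bar{F}(t)\geq \bar{G}(x)/\bar{G}(t)$ for $x\geq t$, raising to the power $2-\alpha>0$, weighting by $x$, integrating over $(t,\infty)$, and splitting on the sign of $\alpha-1$; your hazard-integral representation justifies the first step, which the paper only asserts. Your integrability caveat is not needed, since if the $F$-integral diverges the inequality holds trivially in the extended reals; and appealing to the earlier finiteness theorem would not work for $1<\alpha<2$ anyway, because its bound $x^{1-p(2-\alpha)}$ is integrable at infinity only when $p(2-\alpha)>2$, not merely when $p>2$.
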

  \begin{proof}
       Let $h_1(t)\leq h_2(t)$.
        Then $\bar{F}_{X_t}(t)$$\geq$ $\bar{G}_{X_t}(t)$.\\
       So,
        \[
        \frac{\bar{F}(x)}{\bar{F}(t)} \geq \frac{\bar{G}(x)}{\bar{G}(t)}.
        \]
        \[
        x{\left(\frac{\bar{F}(x)}{\bar{F}(t)}\right)}^{(2-\alpha)} \geq x{\left( \frac{\bar{G}(x)}{\bar{G}(t)}\right)}^{(2-\alpha)}        \]
        Integrating both sides, we get
        \[
       \int_{t}^{\infty} x{\left(\frac{\bar{F}(x)}{\bar{F}(t)}\right)}^{(2-\alpha)} dx\geq \int_{t}^{\infty}x{\left( \frac{\bar{G}(x)}{\bar{G}(t)}\right)}^{(2-\alpha)}dx.
        \]
        From (\ref{WCRMt}), we get
    \[
        (\alpha-1)CRM_\alpha^{w}(F;t)+1\geq(\alpha-1)CRM_\alpha^{w}(G;t)+1.
     \]
     Therefore

        \[
        CRM_\alpha^{w}(F;t) \geq CRM_\alpha^{w}(G;t), for~ 1<\alpha<2\]
      and
      \[CRM_\alpha^{w}(F;t) \leq CRM_\alpha^{w}(G;t), for~ 0<\alpha<1.\]
  \end{proof}
  \begin{theorem}
      Let $X$ be a random variable that possesses both IWDCRMHE and DWDCRMHE. Then $X$ has a Rayleigh distribution.
  \end{theorem}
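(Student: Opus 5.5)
If $X$ is both IWDCRMHE and DWDCRMHE, then by Definition \ref{def} the function $t\mapsto CRM_{\alpha}^{w}(X_t)$ is simultaneously nondecreasing and nonincreasing on $t\geq 0$. It is therefore constant in $t$, say $CRM_{\alpha}^{w}(X_t)=k$ for all $t\ge 0$. The plan is to reduce the statement to this observation and then invoke Theorem \ref{cons}, which asserts that $CRM_{\alpha}^{w}(X_t)$ is independent of $t$ if and only if $X$ is Rayleigh. This gives the conclusion directly.

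To make the argument self-contained rather than purely a citation, I would also show the mechanism through the inequalities of the preceding theorem. For $1<\alpha<2$, IWDCRMHE gives (\ref{new1}) with $\geq$, and DWDCRMHE gives it with $\leq$. Hence, for all $t>0$,
\[
CRM_\alpha^{w}(X_t)=\frac{1}{\alpha-1}\left(\frac{t}{(2-\alpha)h(t)}-1\right).
\]
The case $0<\alpha<1$ uses (\ref{new2}) in exactly the same way. Rearranging, or equivalently setting $CRM_\alpha^{w}{'}(X_t)=0$ in (\ref{diff}), yields
\[
h(t)=\frac{t}{(2-\alpha)\left((\alpha-1)CRM_\alpha^{w}(X_t)+1\right)}.
\]
Combined with the constancy $CRM_\alpha^{w}(X_t)=k$, this becomes $h(t)=t/\sigma^2$ with $\sigma^2=(2-\alpha)((\alpha-1)k+1)$. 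Integrating the hazard rate then gives $\bar F(t)=\exp(-t^2/(2\sigma^2))$, the Rayleigh survival function.

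The only delicate step is justifying that $\sigma^2>0$, so that the hazard rate is admissible. This holds because $(\alpha-1)k+1=\int_t^\infty x(\bar F(x)/\bar F(t))^{2-\alpha}dx>0$ for every $t$, and $2-\alpha>0$. I would also note that the differentiation leading to (\ref{diff}) presumes that $F$ is absolutely continuous and that $CRM_\alpha^w(X_t)$ is finite; both are assumed implicitly throughout the section. Beyond this, the proof is immediate.
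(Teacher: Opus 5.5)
Your proof is correct and follows the paper's argument: being both IWDCRMHE and DWDCRMHE forces $CRM_{\alpha}^{w}(X_t)$ to be constant in $t$, and Theorem~\ref{cons} then gives the Rayleigh law. Your supplementary derivation through (\ref{new1})--(\ref{new2}) and (\ref{diff}) simply reproduces the proof of Theorem~\ref{cons}. Because it uses only that the two opposite inequalities hold at the same time, it is unaffected by the fact that (\ref{new2}), as printed, has its inequality directions reversed (dividing by $\alpha-1<0$ should flip them).
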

  \begin{proof}
      Let $X$ possess IWDCRMHE (DWDCRMHE). Then
      \begin{equation}\label{ray}
      CRM_{\alpha}^{w}{'}(X_t)\geq(\leq)0.
      \end{equation}
     From (\ref{ray}), it is clear that
     \begin{equation*}
     CRM_{\alpha}^{w}(X_t)=k, a~ constant.
     \end{equation*}
     From Theorem \ref{cons}, we can see that the constancy of $CRM_\alpha^{w}(X_t)$ implies that $X$ has the Rayleigh distribution.Hence, the proof.

  \end{proof}
  \section{Test for Rayleigh distribution}
We develop a test for the Rayleigh distribution using a characterisation result stated in Theorem 4.1.   Let $X_1, X_2, \ldots, X_n$ be a random sample of size $n$ drawn from an unknown distribution function $F$. The objective is to test whether the underlying distribution $F$ belongs to the Rayleigh family. Accordingly, we formulate the hypotheses as follows:
\begin{equation*}
H_0 : X~ \text{follows a Rayleigh distribution}
\end{equation*}
against
\begin{equation*}
H_1 : X~ \text{does not follow a Rayleigh distribution.}
\end{equation*}
   For testing the above hypothesis, we define a departure measure, which discriminates between the null and the alternative hypotheses.\\
 For the Rayleigh distribution, $CRM_\alpha^{w}(X_t)$ is independent of $t$. Therefore, under the null hypothesis
  \[
  CRM_\alpha^{w}{'}(X_t)=0, \text{for all $t$} \geq0,
  \]
    which implies
  \begin{equation*}
      t{(\bar{F}(t))^{(3-\alpha)}}-(2-\alpha)f(t)\int_{t}^{\infty}x{(\bar{F}(x))^{(2-\alpha)}}\, dx=0.
  \end{equation*}
  In view of the above identity, we consider a measure of departure $\Delta(F)$, which is given by
  \begin{equation}\label{departure}
      \Delta(F)= \int_{0}^{\infty}\left(t{(\bar{F}(t))^{(3-\alpha)}}-(2-\alpha)f(t)\int_{t}^{\infty}x{(\bar{F}(x))^{(2-\alpha)}}\, dx\right)\, dt.
  \end{equation}
  Clearly, under $H_0$, $\Delta(F)$ is zero and under $H_1$, $\Delta(F)$ is positive. Therefore, $\Delta(F)$ can be considered as the departure measure from the null hypothesis $H_0$ to the alternative hypothesis $H_1$.    Using integration by parts on (\ref{departure}), we get
   \begin{align}\label{departure2}
       \Delta(F)=\frac{(3-\alpha)^2}{2}\int_{0}^{\infty}t^2(\bar{F}(t))^{(2-\alpha)}f(t)dt-\frac{(2-\alpha)^2}{2}\int_{0}^{\infty}t^2(\bar{F}(t))^{(1-\alpha)}f(t)dt.
   \end{align}
   Next, we obtain test statistics based on $\Delta(F)$. For this purpose, consider the order statistics $X_{(1)}$, $X_{(2)}$,...,$X_{(n)}$ based on a random sample of size $n$. The empirical survival function is given by
\[
\bar{F}_n\!\left(X_{(i)}\right)=\frac{n-i}{n}, \qquad i=1,2,\ldots,n.
\]
Hence, the test statistic can be defined as a plug-in estimator of $\Delta(F)$, which is given by
   \begin{align}\label{departure3}
     \widehat{\Delta}=\frac{1}{2n}\sum_{i=1}^{n}X_{(i)}^2\left((3-\alpha)^2\left(\frac{n-i}{n}\right)^{(2-\alpha)}-(2-\alpha)^2\left(\frac{n-i}{n}\right)^{(1-\alpha)}\right).
   \end{align}We reject $H_0$ against the alternative $H_1$ for large values of  $\widehat{\Delta}.$
   Next, we look into the asymptotic distribution of $ \widehat{\Delta}$.
   \begin{theorem}
 As $n \to \infty$, $\sqrt{n}(\widehat\Delta-\Delta(F))$ converges in distribution to a normal random variable with mean zero and variance $\sigma^2$, where $\sigma^{2}$ is given by \begin{equation*}
   \sigma^{2}=Var \Big[ \frac{1}{2}X^2\Big ((3-\alpha)^2\bar{F}^{2-\alpha}(x)-(2-\alpha)^2\bar{F}^{1-\alpha}(x)\Big)\Big]
\end{equation*}
and
\begin{equation*}
\Delta(F)=E(h(X)).
\end{equation*}
\end{theorem}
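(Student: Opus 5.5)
The plan is to show that $\widehat{\Delta}$ is asymptotically equivalent to the sample mean of the i.i.d.\ variables $h(X_1),\ldots,h(X_n)$, where
\[
h(x)=\frac{1}{2}x^{2}\Big((3-\alpha)^{2}\bar{F}^{2-\alpha}(x)-(2-\alpha)^{2}\bar{F}^{1-\alpha}(x)\Big),
\]
and then to apply the classical central limit theorem. By (\ref{departure2}),
\[
\Delta(F)=\int_0^\infty h(t)f(t)\,dt=E(h(X)),
\]
which is the second assertion of the theorem. For the first, I would rewrite (\ref{departure3}) as a sum over the unordered sample. For each $i$, $\frac{n-i}{n}=\bar{F}_n(X_{(i)})$, so
\[
\widehat{\Delta}=\frac{1}{n}\sum_{j=1}^{n}\widehat{h}_n(X_j),
\]
where $\widehat{h}_n$ is the function $h$ with $\bar{F}$ replaced by the empirical survival function $\bar{F}_n$.

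Next I would split the statistic as
\[
\sqrt{n}\,(\widehat{\Delta}-\Delta(F))=\sqrt{n}\Big(\frac{1}{n}\sum_{j=1}^{n}h(X_j)-E\,h(X)\Big)+\frac{1}{\sqrt{n}}\sum_{j=1}^{n}\big(\widehat{h}_n(X_j)-h(X_j)\big).
\]
The first term is a normalized mean of i.i.d.\ variables with mean $\Delta(F)$. Assuming $E\,h^{2}(X)<\infty$, the Lindeberg--L\'evy CLT shows that it converges in distribution to $N(0,\sigma^{2})$ with $\sigma^{2}=Var(h(X))$, which is exactly the stated variance. To verify the moment condition I would use $\bar{F}(X)\sim U(0,1)$ together with H\"older's inequality. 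The factor $\bar{F}^{2(1-\alpha)}$ is harmless for $0<\alpha<1$. For $1<\alpha<2$ it is integrable against the uniform law provided $\alpha<3/2$, or provided a sufficiently high moment of $X$ exists, in the spirit of the finiteness condition $E(X^p)<\infty$ used in Section 2.

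The proof is then completed by Slutsky's theorem once the second term is shown to be $o_p(1)$. I expect this to be the main obstacle. To attack it, I would consider the map $u\mapsto (3-\alpha)^2u^{2-\alpha}-(2-\alpha)^2u^{1-\alpha}$, which is smooth on $(0,1]$, and use the mean value theorem at each order statistic. This bounds each summand by $\tfrac12 X_{(i)}^2\,|g'(\xi_i)|\,|\bar{F}_n(X_{(i)})-\bar{F}(X_{(i)})|$ for some intermediate point $\xi_i$. The Dvoretzky--Kiefer--Wolfowitz bound gives $\sup|\bar{F}_n-\bar{F}|=O_p(n^{-1/2})$.

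Following the statement's formulation, in which $h$ is built from the fixed underlying $F$, I would treat this discrepancy as a lower-order plug-in effect. Concretely, I would restrict to $i\le n-k_n$ with $k_n\to\infty$ slowly, so that $\xi_i$ stays bounded away from $0$, and control the extreme upper order statistics by the tail moment condition. The delicate part is whether the first-order correction truly vanishes rather than merely being bounded. If this step resists, I would fall back on writing the statistic as an $L$-statistic $\frac1n\sum J(i/n)X_{(i)}^{2}$ and invoking standard results for asymptotic normality of $L$-statistics (e.g.\ Stigler's theorem) to obtain normality. I would then check that, in the form stated here with $F$ held fixed inside $h$, the leading variance is $Var(h(X))$.
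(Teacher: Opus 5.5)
\textbf{Gap: the plug-in remainder is not negligible, and the stated variance is wrong.}
Your plan follows the same route as the paper: replace $\bar F_n$ by $\bar F$ inside $h$, then apply the CLT to $\frac1n\sum h(X_i)$. It fails at exactly the step you flagged, because the plug-in remainder $R_n=\frac{1}{\sqrt n}\sum_{j}\bigl(\widehat h_n(X_j)-h(X_j)\bigr)$ is not $o_p(1)$.

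Write $g(u)=(3-\alpha)^2u^{2-\alpha}-(2-\alpha)^2u^{1-\alpha}$, so that $h(x)=\tfrac12x^2g(\bar F(x))$. Your own linearization gives, to first order,
$R_n\approx\frac{1}{\sqrt n}\sum_j\tfrac12X_j^2g'(\bar F(X_j))\bigl(\bar F_n(X_j)-\bar F(X_j)\bigr)$.
Since $\bar F_n(y)-\bar F(y)=\frac1n\sum_k(\mathbf 1\{X_k>y\}-\bar F(y))$, this is a $V$-statistic. Its H\'ajek projection is $\frac1{\sqrt n}\sum_k\phi(X_k)$ with
\[
\phi(x)=\int_0^\infty \tfrac12y^2g'(\bar F(y))\bigl(\mathbf 1\{y<x\}-\bar F(y)\bigr)\,dF(y).
\]
This $\phi$ has mean zero and is non-degenerate, since $g'$ vanishes at most at one point of $(0,1)$.

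Consequences for your plan:
\begin{itemize}
\item $R_n$ is of the same order as the main term and converges jointly with it.
\item Your bound via the mean value theorem and Dvoretzky--Kiefer--Wolfowitz gives $O_p(1)$, and that bound is sharp.
\item Truncating at $i\le n-k_n$ does not help, because the contribution comes from the bulk of the sample, not the extremes.
\end{itemize}
The paper's proof has the same hole in a weaker form. Glivenko--Cantelli only yields $\widehat\Delta=\frac1n\sum h(X_i)+o_p(1)$, which is useless after multiplying by $\sqrt n$.

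\textbf{What the correct limit is.}
The correct influence function is $h+\phi$. Since $\frac{d}{dx}(h+\phi)(x)=x\,g(\bar F(x))$, one gets
$\sqrt n(\widehat\Delta-\Delta(F))\to N\bigl(0,\mathrm{Var}\,\psi(X)\bigr)$ with $\psi(x)=\int_0^x t\,g(\bar F(t))\,dt$.
This is also what Stigler's theorem gives for the $L$-statistic $\frac1n\sum J(i/n)X_{(i)}^2$ with $J(s)=\tfrac12g(1-s)$. So your fallback would deliver normality, but the final check against $\mathrm{Var}(h(X))$ would fail.

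For example, take the standard Rayleigh law and $\alpha=1/2$, and write $V=\bar F(X)\sim U(0,1)$.
\begin{itemize}
\item Then $h(X)=-\log V\,\bigl(\tfrac{25}{4}V^{3/2}-\tfrac94V^{1/2}\bigr)$, with $\mathrm{Var}(h(X))=619/1536\approx0.403$.
\item But $\psi(X)=\tfrac{25}{6}(1-V^{3/2})-\tfrac92(1-V^{1/2})$, whose variance is $3/16$.
\end{itemize}
Hence the variance in the statement, and $\sigma_0^2$ in the corollary, is wrong in general. No completion of your plan can establish the theorem as stated.

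\textbf{Two smaller points.}
\begin{itemize}
\item For $1<\alpha<2$, the $i=n$ term of $\widehat\Delta$ contains $0^{1-\alpha}$, so the statistic is undefined.
\item Extra moments of $X$ cannot rescue $E\,h^2(X)<\infty$ when $\alpha\ge3/2$. Near $\bar F(X)=0$ the $-(2-\alpha)^2\bar F^{1-\alpha}$ term dominates $g$, and $X^4$ stays bounded away from zero as $\bar F(X)\to0$.
\end{itemize}
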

\begin{proof}
Consider
\[
h_n(x)=\frac{1}{2}x^2\Big((3-\alpha)^2\bar{F}_n^{\,2-\alpha}(x)
-(2-\alpha)^2\bar{F}_n^{\,1-\alpha}(x)\Big).
\]
Since
\[
\bar{F}_n\!\left(X_{(i)}\right)=\frac{n-i}{n},
\]
the test statistic can be written as
\[
\widehat{\Delta}
=\frac{1}{n}\sum_{i=1}^{n} h_n\!\left(X_{(i)}\right).
\]
Using the Glivenko--Cantelli theorem,
\[
\sup_x \bigl|\bar{F}_n(x)-\bar{F}(x)\bigr|
\xrightarrow{\text{a.s.}} 0 .
\]
Therefore
\[
h_n(x) \xrightarrow{a.s.} h(x).
\]
Hence
\[
\widehat{\Delta}
=\frac{1}{n}\sum_{i=1}^{n} h_n(X_i)
=\frac{1}{n}\sum_{i=1}^{n} h(X_i) + o_p(1).
\]

Thus, the estimator behaves asymptotically like the sample mean of
\(h(X_i)\).
       Also, since $X_1,X_2,\ldots,X_n$ are i.i.d., it follows that
$h(X_1),h(X_2),\ldots,h(X_n)$ are also i.i.d. random variables with
\[
{E}\bigl[h(X)\bigr]=\Delta(F)
\quad \text{and} \quad
{Var}\bigl(h(X)\bigr)=\sigma^{2}.
\]
By the Central Limit Theorem,
\begin{align*}
\sqrt{n}\left(
\frac{1}{n}\sum_{i=1}^{n} h(X_i)-\Delta(F)
\right)
\xrightarrow{d} N(0,\sigma^{2}).
\end{align*}which prove the required results.

   \end{proof}

\begin{corollary}\label{test}
Under $H_0$, as  $n \to \infty$,  $\sqrt{n}\,\widehat{\Delta}$ converges in distribution to a normal random variable with mean zero and variance $\sigma_0^{2}$, where $\sigma_0^{2}$ is the value of $\sigma^{2}$ evaluated under $H_0$.
\end{corollary}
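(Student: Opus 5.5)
The corollary follows from the preceding theorem once we know that $\Delta(F)=0$ under $H_0$. The plan is to verify that equality directly, substitute it into the limit, and then identify the variance.

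\emph{Step 1: $\Delta(F)=0$ for Rayleigh $F$.} Take $\bar F(t)=\exp(-t^2/(2\sigma^2))$. Then
\[
\int_t^\infty x(\bar F(x))^{2-\alpha}\,dx=\frac{\sigma^2}{2-\alpha}(\bar F(t))^{2-\alpha}
\quad\text{and}\quad
f(t)=\frac{t}{\sigma^2}\bar F(t).
\]
Hence the integrand of (\ref{departure}) equals $t\bar F^{3-\alpha}(t)-t\bar F^{3-\alpha}(t)=0$ for every $t$, so $\Delta(F)=0$. This is the same as the statement of Theorem \ref{cons} that $CRM^{w}_\alpha{'}(X_t)=0$. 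As a consistency check, I would also confirm that (\ref{departure2}) vanishes. With $u=t^2/(2\sigma^2)$ one gets $E[X^2\bar F^{2-\alpha}(X)]=2\sigma^2/(3-\alpha)^2$ and $E[X^2\bar F^{1-\alpha}(X)]=2\sigma^2/(2-\alpha)^2$. The two terms of (\ref{departure2}) therefore cancel, which also confirms that the integration by parts is valid here.

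\emph{Step 2: apply the preceding theorem.} Substituting $\Delta(F)=0$ into its conclusion gives $\sqrt n(\widehat\Delta-0)=\sqrt n\,\widehat\Delta\xrightarrow{d}N(0,\sigma^2)$. Here $\sigma^2=Var(h(X))$ is computed with $X$ Rayleigh, and we call this value $\sigma_0^2$.

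\emph{Step 3: $\sigma_0^2$ is finite and free of the scale.} This is the step needing care. The term $\bar F^{1-\alpha}$ is unbounded when $1<\alpha<2$, so we must check that $E[h(X)^2]<\infty$. With $v=\bar F(X)\sim U(0,1)$ we have $X^2=-2\sigma^2\log v$. Then
\[
h(X)=-\sigma^2\log v\,\bigl((3-\alpha)^2v^{2-\alpha}-(2-\alpha)^2v^{1-\alpha}\bigr).
\]
Squaring gives terms of the form $\log^2 v\,v^{2-2\alpha}$, and these are integrable on $(0,1)$ exactly when $2-2\alpha>-1$, i.e.\ $\alpha<3/2$. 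So the central limit theorem step in the previous proof genuinely requires $0<\alpha<3/2$ (and $\alpha\neq1$), and I would state that restriction explicitly. Under it, $\sigma_0^2=\sigma^4 c(\alpha)$ with
\[
c(\alpha)=E\bigl[\log^2 V\,\bigl((3-\alpha)^2V^{2-\alpha}-(2-\alpha)^2V^{1-\alpha}\bigr)^2\bigr],\qquad V\sim U(0,1).
\]
This has a closed form via $\int_0^1 v^{a}\log^2 v\,dv=2/(a+1)^3$; the subtracted squared mean is zero by Step 1. Because $\sigma_0^2$ depends on the unknown scale $\sigma$, in practice one would standardize, for example by dividing $\widehat\Delta$ by the sample mean of $X_i^2$ (which estimates $2\sigma^2$) and applying Slutsky's lemma. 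For the corollary as stated, however, Steps 1–2 (with the moment condition of Step 3) suffice.
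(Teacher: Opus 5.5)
Your Steps 1--2 are exactly the paper's implicit argument: the corollary is the preceding theorem with $\Delta(F)=0$ substituted. Your check of $\Delta(F)=0$, both pointwise in (\ref{departure}) and through the two moments in (\ref{departure2}), is correct. So is your Step 3 observation that $\mathrm{Var}_{H_0}(h(X))=\infty$ for $\alpha\ge 3/2$, which goes beyond the paper. The argument still cannot be accepted as it stands, for two reasons, both inherited from the theorem you invoke.

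First, the restriction that matters is $0<\alpha<1$, not $\alpha<3/2$. Since $\bar F_n(X_{(n)})=0$, the $i=n$ summand of (\ref{departure3}) contains $0^{1-\alpha}=\infty$ whenever $\alpha>1$. Hence $\widehat\Delta=-\infty$ for every $\alpha\in(1,2)$; the simulations only use $\alpha<1$.

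Second, and more seriously: the real gap in the CLT step you were auditing is the line $\widehat\Delta=\frac1n\sum h(X_i)+o_p(1)$. A $\sqrt n$ limit needs $o_p(n^{-1/2})$, and that fails. Replacing $\bar F$ by $\bar F_n$ inside $h$ adds the first-order term
\[
\int_0^\infty \tfrac12 x^2\Bigl((2-\alpha)^2(1-\alpha)\bar F^{-\alpha}(x)-(3-\alpha)^2(2-\alpha)\bar F^{1-\alpha}(x)\Bigr)\bigl(F_n(x)-F(x)\bigr)\,dF(x).
\]
This term is asymptotically an average of nondegenerate centred i.i.d.\ terms, so it is exactly of order $n^{-1/2}$. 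Put differently, $\widehat\Delta$ is an L-statistic in the $X_{(i)}^2$ with a bounded continuous weight function when $\alpha<1$. Its limiting variance is therefore the usual L-statistic double integral, not $\mathrm{Var}(h(X))$.

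Under $H_0$, write $V=\bar F(X)\sim U(0,1)$. The $\log V$ terms of $h$ cancel against this correction, and the influence function becomes
\[
\sigma^2\Bigl(\frac{(2-\alpha)^2}{1-\alpha}V^{1-\alpha}-\frac{(3-\alpha)^2}{2-\alpha}V^{2-\alpha}\Bigr)+\text{const}.
\]
At $\alpha=1/2$ its variance is $\tfrac{3}{16}\sigma^4$, whereas your $\sigma^4c(1/2)\approx 0.403\,\sigma^4$.

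So the corollary with $\sigma_0^2=\mathrm{Var}_{H_0}(h(X))$ is false. Your derivation is as valid as the paper's, but both rest on a wrong variance formula. Asymptotic normality, the zero mean and the $\sigma^4$ scaling do survive. Your standardization by $\overline{X^2}$ therefore still works once the constant is corrected. The paper's Monte Carlo critical values never use $\sigma_0^2$ and are unaffected.
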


Using Corollary $\ref{test}$, we can obtain the asymptotic critical region.
We reject the null hypothesis $H_0$ against the alternative hypothesis $H_1$ at a significance level $\alpha$, if
\begin{equation*}
    \left|\frac{\sqrt{n}\widehat\Delta}{\widehat\sigma_{0}}\right| > z_{\alpha/2},
\end{equation*}
where $z_{\alpha/2}$ be the upper $\alpha/2$--percentile point of the standard normal distribution.

 \section{Simulation}

   To assess the finite sample performance of the proposed test,  we put forward an extensive Monte Carlo simulation study with the empirical power of the test. We used the R software to perform the simulation 10000 times using different sample sizes.\\
 We compare empirical power of the proposed test with the classical goodness of fit tests which include Anderson Darling (AD) test, Cramer von Mises (CvM) test and Kolmogrov--Smirnov (KS) test, and other goodness of fit tests which are particularly introduced for Rayleigh distribution. For different values of the parameter $\alpha$, we perform the proposed test. When $\alpha=0.1$, we observe that the proposed test performs better. So, we choose the value of $\alpha$ as 0.1 to compare with other competent tests. To obtain the empirical power of the proposed test and other competent tests, we generate lifetimes of different sample sizes, $n= 20, 30, 40, 50, 60$.

\par Since it is difficult to find a consistent estimator for the null variance ${\sigma_0}^2$, for the proposed test, we find the critical region using Monte Carlo simulation. We calculate the critical point $c_1$ as $P(\widehat\Delta>c_1)=\alpha'$, where $\alpha'$ denotes the significance level. The empirical power of the proposed test can be calculated using the algorithm given below.\\
\begin{enumerate}[noitemsep, topsep=0pt]
    \item From the desired alternative, first generate the lifetime data and obtain $\widehat\Delta$.
    \item From the standard Rayleigh distribution, generate lifetime data and obtain the value of $\widehat\Delta$.
    \item Repeat the previous step 10000 times and calculate the critical point.
    \item Repeat step (1) 10000 times and as the proportion ($p$) of significant test statistics, obtain the empirical power.
\end{enumerate}
We generate the lifetime random variables from different alternatives which include exponential, linear failure rate (LFR), Makeham, halfnormal and Pareto distributions to obtain the empirical power. The corresponding distribution functions are as follows.\\
\begin{itemize}[noitemsep, topsep=0pt]
\item Exponential distribution
\begin{equation*}
   F(x) = 1 - e^{-\lambda x}, \quad x > 0,\; \lambda > 0.
\end{equation*}
\item LFR distribution
\begin{equation*}
F(x)=1-\exp(-ax-\frac{b x^2}{2}), x > 0, a\geq 0, b\geq 0.
\end{equation*}
\item Makeham distribution
\begin{equation*}
F(x)=1-\exp(-ax-\frac{b}{c}(e^{cx}-1)), x\geq 0, a> 0, b> 0, c>0.
\end{equation*}
\item Halfnormal distribution
\begin{equation*}
    F(x)=erf\left(\frac{x}{\sigma \sqrt{2}}\right), \sigma >0, \text{where $erf$ be the error function}.
    \end{equation*}
    \item
    Pareto distribution
    \begin{equation*}
        F(x)=1-\left(\frac{\lambda}{x}\right)^a, x>0, a, \lambda>0.
\end{equation*}
\end{itemize}

    \begin{table}[H]

\centering
\caption{Empirical power comparison for various distributions when the value of the parameter $\alpha=0.1$, $\alpha=0.5$ and $\alpha=0.9$}
\label{tab:table3}
\scalebox{0.9}{
\begin{tabular}{llcccccc}
\toprule
 & &\multicolumn{2}{c}{$\alpha=0.1$} & \multicolumn{2}{c}{$\alpha=0.5$} & \multicolumn{2}{c}{$\alpha=0.9$}  \\
\cmidrule(lr){3-4} \cmidrule(lr){5-6} \cmidrule(lr){7-8}
Distribution & $n$
& $\alpha'=0.01$ & $\alpha'=0.05$
& $\alpha'=0.01$ & $\alpha'=0.05$
& $\alpha'=0.01$ & $\alpha'=0.05$
\\
\midrule

Rayleigh(1)

&20  & 0.0101 &0.0512& 0.0136 &0.045 &0.0092 &0.0525\\
&30 &  0.0091& 0.0498& 0.0078& 0.0486 &0.0095& 0.0478\\
&40  & 0.0109& 0.0555& 0.0124 &0.0507& 0.0110& 0.0551\\
&50&   0.0096 &0.0404 &0.0081 &0.0528& 0.0076& 0.0434 \\
&60 &  0.0112& 0.0532 &0.0117& 0.0471& 0.0092& 0.0485\\

\midrule
Exponential(2)
&20 &  0.9891 &0.9972 &0.9678& 0.9906 &0.9344& 0.9696 \\
&30 &  0.9986& 0.9999 &0.9957 &0.9993 &0.9735 &0.9927\\
&40  & 1.0000& 1.0000& 0.9995& 0.9999& 0.9925& 0.9978 \\
&50&   1.0000& 1.0000& 0.9999& 1.0000& 0.9966& 0.9992 \\
&60 &  1.0000& 1.0000& 1.0000& 1.0000& 0.9993& 0.9999 \\

\midrule
LFR(1,1)
&20&   0.9179& 0.9779 &0.8763 &0.9618 &0.7849& 0.9256 \\
&30&   0.9828& 0.9968& 0.9660 &0.9934& 0.9257& 0.9818 \\
&40&   0.9971& 0.9997& 0.9920& 0.9990& 0.9712 &0.9957 \\
&50&   0.9999& 0.9999 &0.9975& 0.9998 &0.9934 &0.9986 \\
&60  & 1.0000& 1.0000& 0.9999& 1.0000& 0.9979& 0.9997 \\

\midrule
Makeham(1,0.5,0.5)

&20&   0.9452& 0.9845& 0.8968& 0.9652 &0.8168 &0.9176 \\
&30&   0.9878 &0.9975 &0.9673& 0.9921& 0.9216 &0.9724 \\
&40 &  0.9985& 0.9998 &0.9929& 0.9981& 0.9623& 0.9900\\
&50 &  0.9997& 0.9999& 0.9980& 1.0000& 0.9844 &0.9975 \\
&60 &  1.0000& 1.0000& 0.9991& 1.0000& 0.9936& 0.9984\\

\midrule
Halfnormal(1)
&20 &  0.5760& 0.7888& 0.4722& 0.7138 &0.3939 &0.6414\\
&30  & 0.7446& 0.8925& 0.6452& 0.8467 &0.5104& 0.744 \\
&40&   0.8540& 0.9445& 0.7738& 0.9151& 0.6438 &0.8397 \\
&50&   0.9174 &0.9763& 0.8486& 0.9468 &0.7155& 0.8805 \\
&60 &  0.9600& 0.9903& 0.9077& 0.9715& 0.7999& 0.9304 \\
\midrule
Pareto(1,5)

&20  & 0.9353 &0.9846 &0.9204 &0.9770 &0.8950& 0.9577 \\
&30 &  0.9949& 0.9991 &0.9874 &0.9968 &0.9625& 0.9890\\
&40 &  0.9996 &1.0000& 0.9964 &0.9991& 0.9876& 0.9958 \\
&50 &  0.9999 &1.0000& 0.9995& 0.9998& 0.9940& 0.9983 \\
&60&   1.0000& 1.0000& 0.9995& 1.0000& 0.9973& 0.9991\\

\bottomrule
\end{tabular}}
\end{table}

Here, we perform the proposed test for different values of the parameter $\alpha$. From Table \ref{tab:table3}, we can see that the empirical power decreases as the value of $\alpha$ increases. Hence, for $\alpha=0.1$, the proposed test gives a high power.
\par The performance of the proposed test (denoted by ASS) is compared with the goodness of fit test for the Rayleigh distribution introduced by Jahanshahi et al. (2016) (JH), Baratpour and Khodadadi (2013) (BK), and Finkelstein and Schafer (1971) (FS), as well as with the classical tests Kolmogrov--Smirnov (KS), Cramer von Mises (CvM) and Anderson Darling (AD) test.

\begin{table}[!htbp]
\centering
\caption{Empirical power comparison for different alternatives ($\alpha' = 0.01$)}
\label{0.01}
\begin{tabular}{c c ccccccc}
\hline
Distribution & $n$ & ASS  &   KS   & CvM  &   AD     &   JH   &  BK   &  FS\\
\hline

\multirow{5}{*}{Rayleigh(1)}

&20 &0.0101& 0.0083& 0.0104& 0.0103 &0.0129 &0.0154& 0.0106\\
&30& 0.0117& 0.0106& 0.0108& 0.0095&  0.0076& 0.0118 &0.0106\\
&40& 0.0082 &0.0077& 0.0074& 0.0073&  0.0100 &0.0101& 0.0067\\
&50& 0.0109& 0.0117& 0.0131& 0.0113 & 0.0103& 0.0110& 0.0105\\
&60 &0.0111& 0.0115& 0.0103& 0.0088 & 0.0086& 0.0116 &0.0076\\

\midrule
\multirow{5}{*}{Exponential(2)}

&20& 0.9893& 0.4857& 0.4315& 0.9059 & 0.8799& 0.7711 &0.8075\\
&30& 0.9989& 0.6428& 0.5791& 0.9823&  0.9369& 0.9158 &0.9422\\
&40 &1.0000 &0.7672 &0.7110 &0.9968 & 0.9577& 0.9639 &0.9859\\
&50& 1.0000& 0.8459 &0.8591& 0.9991&  0.9690& 0.9894 &0.9960\\
&60& 1.0000& 0.9128& 0.8965& 1.0000&  0.9794& 0.9089& 0.9994\\
\midrule
\multirow{5}{*}{LFR(1,1)}
&20& 0.9053& 0.4377& 0.3805& 0.5879 & 0.5812 &0.3617 &0.3856\\
&30& 0.9819 &0.6352& 0.5570 &0.7644&  0.6573& 0.4816& 0.5810\\
&40& 0.9962& 0.7720 &0.7385& 0.8759&  0.6991& 0.6140& 0.7113\\
&50& 0.9998& 0.8885& 0.8599& 0.9283 & 0.7190& 0.7030& 0.8213\\
&60& 1.0000& 0.9533 &0.9349& 0.9692 & 0.7443& 0.3288 &0.8997\\

\midrule

\multirow{5}{*}{Makeham(1,0.5,0.5)}
&20 &0.9390 &0.3528& 0.3056& 0.8286&  0.8135& 0.6299 &0.6699\\
&30& 0.9884& 0.5408& 0.4796& 0.9504&  0.8726 &0.8149 &0.8629\\
&40& 0.9974& 0.6551& 0.6171 &0.9892&  0.9024 &0.9134 &0.9575\\
&50 &1.0000 &0.7658& 0.7484& 0.9950&  0.9406& 0.9466 &0.9792\\
&60& 1.0000& 0.8930& 0.8892& 0.9993 & 0.9537& 0.7697& 0.9959\\
\midrule
\multirow{5}{*}{Halfnormal(1)}
&20 &0.5628 &0.1686& 0.1432 &0.5132 & 0.5198 &0.2765& 0.3211\\
&30& 0.7332& 0.2925& 0.2503& 0.7222 & 0.6301 &0.4215 &0.4998\\
&40 &0.8441& 0.3945& 0.3442& 0.8313&  0.6518 &0.4954 &0.6617\\
&50 &0.9222 &0.4734 &0.4415& 0.9059&  0.6918 &0.6044& 0.7651\\
&60 &0.9543& 0.6041& 0.5968 &0.9522 & 0.7243 &0.2574& 0.8504\\
\midrule
\multirow{4}{*}{Pareto(1,5)}
&20& 0.9312& 0.9519 &0.8227 &0.9981 & 0.8820 &0.5591 &0.9925\\
&30& 0.9940& 0.9981& 0.9802 &1.0000&  0.7977& 0.8674 &0.9998\\
&40& 0.9990& 1.0000& 0.9990& 1.0000 & 0.7289 &0.9710 &1.0000\\
&50& 1.0000& 1.0000& 1.0000 &1.0000&  0.6886& 0.9928 &1.0000\\
&60& 1.0000& 1.0000& 1.0000& 1.0000 & 0.6414& 0.5101 &1.0000\\

\hline
\end{tabular}
\end{table}

\begin{table}[!htbp]
\centering
\caption{Empirical power comparison for different alternatives ($\alpha' = 0.05$)}
\label{0.05}
\begin{tabular}{c c ccccccc}
\hline
Distribution & $n$ & ASS  &   KS   & CvM  &   AD     &   JH   &  BK   &  FS\\
\hline

\multirow{5}{*}{Rayleigh(1)}

&20& 0.0491& 0.0490& 0.0536& 0.0538&  0.0486& 0.0567& 0.0523\\
&30& 0.0518& 0.0476& 0.0488& 0.0487& 0.0430 &0.0541& 0.0490\\
&40& 0.0497& 0.0436& 0.0454& 0.0438&  0.0475& 0.0441& 0.0446\\
&50& 0.0522& 0.0516& 0.0522& 0.0492&  0.0528& 0.0541& 0.0480\\
&60& 0.0528& 0.0501& 0.0496& 0.0463&  0.0457& 0.0556& 0.0471\\

\midrule
\multirow{5}{*}{Exponential(2)}
&20& 0.9972 &0.7089& 0.7042& 0.9586&  0.9526& 0.8832 &0.8996\\
&30& 0.9998 &0.8460& 0.8444& 0.9927&  0.9771& 0.9713& 0.9801\\
&40& 1.0000 &0.9186& 0.9344 &0.9991 & 0.9870& 0.9918& 0.9961\\
&50& 1.0000& 0.9705 &0.9800& 0.9999&  0.9903& 0.9975& 0.9990\\
&60& 1.0000& 0.9855& 0.9912& 1.0000 & 0.9953& 0.9636 &1.0000\\

\midrule
\multirow{5}{*}{LFR(1,1)}
&20& 0.9796& 0.6916& 0.6803 &0.7456&  0.7684& 0.5539& 0.5753\\
&30& 0.9971& 0.8519 &0.8455 &0.8758& 0.8153 &0.6897 &0.7517\\
&40& 0.9995& 0.9331& 0.9279& 0.9458 & 0.8374& 0.7916& 0.8623\\
&50& 0.9999& 0.9736& 0.9754& 0.9759 & 0.8607& 0.8634 &0.9236\\
&60& 1.0000& 0.9919& 0.9923 &0.9913 & 0.8878& 0.5191& 0.9647\\
\midrule

\multirow{5}{*}{Makeham(1,0.5,0.5)}
&20& 0.9832& 0.6210& 0.6069& 0.9160 & 0.9162 &0.8008 &0.8245\\
&30& 0.9978& 0.7857& 0.8031& 0.9819&  0.9563& 0.9248 &0.9463\\
&40& 0.9996& 0.9006& 0.9039& 0.9954&  0.9644& 0.9695& 0.9838\\
&50& 0.9998& 0.9538& 0.9637& 0.9994 & 0.9789& 0.9880& 0.9957\\
&60& 1.0000 &0.9820& 0.9862& 0.9998&  0.9841& 0.8809& 0.9985\\
\midrule
\multirow{5}{*}{Halfnormal(1)}
&20& 0.7734& 0.3843& 0.3743 &0.7025 & 0.7263 &0.4881 &0.5114\\
&30& 0.8971& 0.5704 &0.5659& 0.8516 & 0.7946 &0.6316& 0.7092\\
&40& 0.9487& 0.6983& 0.7206& 0.9200 & 0.8129& 0.7286& 0.8148\\
&50& 0.9773& 0.7856& 0.8031& 0.9629 & 0.8359 &0.8168& 0.8892\\
&60 &0.9880& 0.8608& 0.8799 &0.9814 & 0.8675& 0.4489 &0.9363\\
\midrule
\multirow{4}{*}{Pareto(1,5)}
&20& 0.9825& 0.9945& 0.9773 & 1.0000&  0.9448& 0.9300& 0.9999\\
&30 &0.9990& 1.0000& 0.9997&  1.0000&  0.8405& 0.9924& 1.0000\\
&40& 0.9999& 1.0000 &1.0000  &1.0000&  0.7770 &0.9995 &1.0000\\
&50 &1.0000& 1.0000& 1.0000&  1.0000&  0.7308& 1.0000& 1.0000\\
&60& 1.0000 &1.0000 &1.0000 & 1.0000&  0.6887& 0.7731& 1.0000\\

\hline
\end{tabular}
\end{table}
Table \ref{0.01} and Table \ref{0.05} provide the empirical power comparison at significance level $\alpha'=0.01$ and $\alpha'=0.05$, respectively. From Table \ref{0.01} and Table \ref{0.05}, it is clear that the proposed test is better than other competent tests.
\section{Data Analysis}
We use two real data sets to demonstrate the developed testing procedure. The parametric bootstrap procedure given below is used to find the p--values.
\begin{enumerate}
    \item From the observed data, first compute the maximum likelihood estimate of the Rayleigh distribution parameter.
    \item  Generate 10000 bootstrap samples using the Rayleigh distribution parameter estimated in the previous step,
    \item For each sample simulated in Step 2, compute the value of the test statistic.
    \item The bootstrap p--value is obtained as the proportion of the test statistic value from the previous step, which is greater than the test statistic computed form the original data.
\end{enumerate}
To demonstrate the proposed test, we first analyze the real data which give the failure times of 23 ball bearings, mentioned in Caroni (2002) and the failure times are 17.88, 28.92, 33.00, 41.52, 42.12, 45.60, 48.48, 51.84, 51.96, 54.12, 55.56, 67.80, 67.80, 67.80, 68.88, 84.12, 93.12, 98.64, 105.12, 105.84, 127.92, 128.04, 173.40. The data represents the number of million revolutions before failure for each 23 ball bearings ordered according to life endurance. Lieblein and Zelen (1956) used the dataset originally. For studying Rayleigh distribution, the data were analyzed by many authors comprising Kim and Han (2009), Dey and Dey (2014), Vaisakh et al. (2023) and Smitha et al. (2023). The scale parameter of the Rayleigh distribution is estimated as $\widehat\sigma=57.23062$. The bootstrap p--value is obtained as 0.4921 using the above mentioned algorithm. Therefore, we accept the null hypothesis that the data follow a Rayleigh distribution. It is clear that the results of previous studies match this result. The Q--Q plot in figure 1 also shows that the data follows Rayleigh distribution.
\begin{figure}
    \centering
    \includegraphics[width=0.9\linewidth]{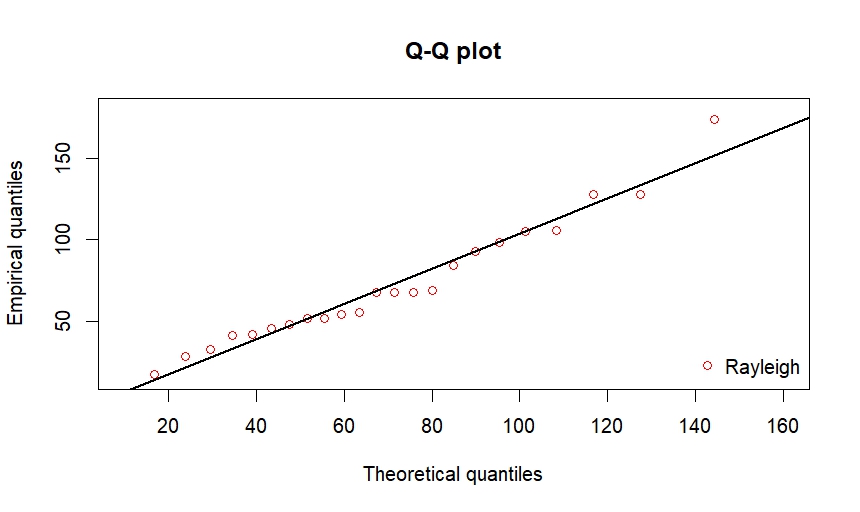}
    \caption{Q--Q plot for ball bearing data}.
    \label{fig:graph 1}
\end{figure}

Next, we analyze the data for survival times in weeks for 20 male rats that were exposed to high level radiation. The data are due to Furth et al. (1959) and have been discussed by Lawless (2003) and the survival times are 152, 152, 115, 109, 137, 88, 94, 77, 160, 165, 125, 40, 128, 123, 136, 101, 62, 153, 83, 69. The bootstrap p--value is obtained as 0.0179 using the above mentioned algorithm. Therefore, we reject the null hypothesis that the data follow a Rayleigh distribution at a level of significance of 5$\%$. From  Vaisakh et al. (2023), we can see that this data follow a gamma distribution. Hence, it is clear that the proposed test is effective in the goodness of fit test.
\section{Conclusion}

In this paper, we introduced the weighted cumulative residual Mathai--Haubold entropy and studied its properties. Also, its dynamic version named dynamic weighted cumulative residual Mathai--Haubold entropy is developed and studied its behavior under linear transformations, bounds and explicit expressions for some lifetime distributions. Based on the associated measure, some characterization results are obtained and formulate two new classes of life distributions. Then we propose a goodness-of-fit test for the Rayleigh distribution and its performance is evaluated through a Monte Carlo simulation study. The practical applicability of the proposed methodology is demonstrated through two real data sets  .

\section{References}

\begin{enumerate}
\item Anija, C. R., Smitha, S., \& Kattumannil, S. K. (2025).
The cumulative residual Mathai--Haubold entropy and its non-parametric inference.\\
\textit{arXiv}. https://doi.org/10.48550/arXiv.2512.10997.
\item Balakrishnan, N., Buono, F., \& Longobardi, M. (2022). On weighted extropies. \textit{Communications in Statistics – Theory and Methods}, 51(18), 6250--6267.
\item Baratpour, S., \& Khodadadi, F. (2013). A cumulative residual entropy characterization of the Rayleigh distribution and related goodness-of-ft test. \textit{Journal of Statistical Research of Iran}, 9(2), 115--131.
\item Belis, M., \& Guiasu, S. (1968). A quantitative-qualitative measure of information in cybernetic systems (Corresp.). \textit{IEEE Transactions on Information Theory}, 14(4), 593--594.
\item Caroni, C. (2002). The correct ball bearings data. \textit{Lifetime Data Analysis}, 8, 395--399.
\item Chakraborty, S., \& Pradhan, B. (2023). Generalized weighted survival and failure entropies and their dynamic versions. \textit{Communications in Statistics – Theory and Methods}, 52(3), 730--750.
\item Dey, S. \& Dey, T. (2014). Statistical inference for the Rayleigh distribution under progressively
Type-II censoring with binomial removal. \textit{Applied Mathematical Modelling}, 38, 974--982.
\item Dar, J. G. \& Al-Zahrani, B. (2013). On some characterization results of lifetime distributions using Mathai--Haubold residual entropy. \textit{IOSR Journal of Mathematics}, 5(4), 56--60.
\item Ebrahimi, N. (1996). How to measure uncertainty in  the residual lifetime distribution. \textit{Sankhya: The Indian Journal of Statistics, Series A}, 58(1), 48--56.
\item Finkelstein, J. M., \& Schafer, R. E. (1971).
Improved goodness-of-fit tests. \textit{Biometrika}, 58(3), 641--645.
\item Furth, J., Upton, A. C., \& Kimball, A. W. (1959). Late pathologic effects of atomic detonation and their pathogenesis. \textit{Radiation Research Supplement}, 1, 243--264.
\item Jahanshahi, S. M. A., Rad, A. H., \& Fakoor, V. (2016). A goodness-of-ft test for Rayleigh distribution based on Hellinger distance. \textit{Annals of Data Science}, 3(4), 401--411.
\item Khammar, A. H., \& Jahanshahi, S. M. A. (2018). On weighted cumulative residual Tsallis entropy and its dynamic version. \textit{Physica A: Statistical Mechanics and its Applications}, 491, 678--692.
\item Kim, C. \& Han, K. (2009). Estimation of the scale parameter of the Rayleigh distribution with
multiply type–II censored sample. \textit{Journal of Statistical Computation and Simulation}, 79, 965--976.
\item Lawless, J.F. (2003) \textit{Statistical Models and Methods for Lifetime Data}, John Wiley and Sons, New York.
\item Lieblein, J. \& Zelen, M. (1956). Statistical investigation of the fatigue life of deep-groove ball bearings. \textit{Journal of Research of the National Bureau of Standards}, 57, 273--316.
\item    Mathai, A. M., \& Haubold, H. J. (2006). Pathway model, superstatistics, Tsallis statistics, and a generalized measure of entropy. \textit{Physica A:Statistical Mechanics and its Applications}, 375(1), 110--122.
\item Mirali, M., Baratpour, S., \& Fakoor, V. (2017). On weighted cumulative residual entropy. \textit{Communications in Statistics – Theory and Methods}, 46(6), 2857--2869.

\item Rao, M., Chen, Y., Vemuri, B.C., \& Wang, F. (2004). Cumulative residual entropy: A new measure of information. \textit{IEEE Transactions on Information Theory}, 50(6), 1220--1228.
\item   Shannon, C. E. (1948). A mathematical theory of communication. \textit{Bell System Technical Journal}, 27(3), 379--423.
\item Smitha, S., Kattumannil, S. K., \& Sreedevi, E. P. (2024). Weighted cumulative residual entropy generating function and its properties. \textit{arXiv. https://arxiv.org/abs/2402.06571}.

\item Vaisakh, K. M., Xavier, T. \& Sreedevi, E. P. (2023). Goodness of fit test for Rayleigh distribution
with censored observations. \textit{Journal of the Korean Statistical Society}, 52, 794--815.

\end{enumerate}
\end{document}